\newtheorem{theorem}{Theorem}
\newtheorem{lemma}[theorem]{Lemma}
\theoremstyle{definition}
\newcommand{\name}{Flexible BFT\xspace}
\renewcommand{\paragraph}[1]{\vspace{7pt}\noindent\textbf{#1}}
\newcommand{\kartik}[1]{}
\newcommand{\ignore}[1]{}
\newcommand{\sig}[1]{\langle #1 \rangle}
\newcommand{\Status}{\mathsf{status}}
\newcommand{\Propose}{\mathsf{propose}}
\newcommand{\Blame}{\mathsf{blame}}
\newcommand{\Vote}{\mathsf{vote}}
\newcommand{\TimeEquiv}{{\mathsf{t}\text{-}\mathsf{equiv}}}
\newcommand{\TimeAccept}{{\mathsf{t\text{-}{lock}}}}
\newcommand{\TimeBlame}{{\mathsf{t\text{-}{viewchange}}}}
\newcommand{\TimeCurrent}{{\mathsf{current}\text{-}\mathsf{time}}}
\newcommand{\TimeLock}{{\mathsf{t\text{-}{lock}}}}
\newcommand{\CommitCert}{\mathcal{C}^{\qmin}}
\newcommand{\Cert}{\CommitCert}
\newcommand{\CertN}{\mathcal{C}}
\newcommand{\StatusCert}{\mathcal{S}}
\newcommand{\pleasing}{alive-but-corrupt\xspace}
\newcommand{\psing}{a-b-c\xspace}
\newcommand{\Pleasing}{Alive-but-corrupt\xspace}
\newcommand{\block}{B}
\newcommand{\ViewNumber}{v}
\newcommand{\q}{{q_c}}
\newcommand{\qmin}{{q_{r}}}
\newcommand{\qlock}{{q_\text{lck}}}
\newcommand{\qunlock}{{q_\text{ulck}}}
\newcommand{\qunique}{{q_\text{unq}}}
\newcommand{\qcommit}{{q_\text{cmt}}}
\begin{document}
\title{Flexible Byzantine Fault Tolerance}
\author[1]{Dahlia Malkhi}
\author[1,2]{Kartik Nayak}
\author[1,3]{Ling Ren}
\affil[1]{VMware Research -- {\tt \{dmalkhi,nkartik,lingren\}@vmware.com}}
\affil[2]{Duke University}
\affil[3]{University of Illinois at Urbana-Champaign}
\ignore{
\keywords{Distributed computing, Byzantine Fault Tolerance,
  Synchrony}
}
\sloppy
\maketitle
\begin{abstract}
This paper introduces \name, a new approach for BFT consensus solution design revolving
around two pillars, stronger resilience and diversity.
The first pillar, stronger resilience, involves a new fault model called \pleasing faults.
\Pleasing replicas may arbitrarily deviate from the protocol in an attempt to break safety of the protocol. 
However, if they cannot break safety, they will not try to prevent liveness of the protocol.
Combining \pleasing faults into the model, 
\name is resilient to higher corruption levels than possible in a pure Byzantine
fault model. 
%
The second pillar, diversity, designs consensus solutions 
whose protocol transcript is used to draw different commit decisions under diverse beliefs. 
With this separation, the same \name solution supports synchronous and asynchronous
beliefs, as well as varying resilience threshold combinations of Byzantine and
\pleasing faults.

At a technical level, \name achieves the above results using two new ideas. 
First, it introduces a synchronous BFT protocol 
in which only the commit step requires to know the network delay bound 
and thus replicas execute the protocol without any synchrony assumption.
Second, it introduces a notion called Flexible Byzantine Quorums by 
dissecting the roles of different quorums in existing consensus protocols.
\end{abstract}

\ignore{
\begin{CCSXML}
<ccs2012>
<concept>
<concept_id>10002978.10003006.10003013</concept_id>
<concept_desc>Security and privacy~Distributed systems security</concept_desc>
<concept_significance>500</concept_significance>
</concept>
</ccs2012>
\end{CCSXML}

\ccsdesc[500]{Security and privacy~Distributed systems security}
}
\section{Introduction}
\label{sec:introduction}

Byzantine fault tolerant (BFT) protocols are used to build replicated services~\cite{PSL80,LSP82,schneider1990implementing}.
Recently, they have received revived interest
as the algorithmic foundation of what is known as decentralized ledgers, or blockchains. 

In the classic approach to BFT protocol designs,
a protocol designer or a service administrator first picks a set of assumptions 
(e.g., the fraction of Byzantine faults and certain timing assumptions)
and then devises a protocol (or chooses an existing one) tailored for that particular setting.
The assumptions made by the protocol designer are imposed upon all parties involved --- 
every replica maintaining the service as well as every client (also known as the
"learner" role) using the service.
Such a protocol collapses if deployed under settings that differ from the one it is designed for. 
In particular, optimal-resilience partially synchronous solutions~\cite{DLS88,castro1999practical} 
break (lose safety and liveness) if the fraction of Byzantine faults exceeds
$1/3$. 
Similarly, optimal-resilience synchronous solutions~\cite{abraham2018synchronous,hanke2018dfinity}
do not obtain safety or liveness if the fraction of Byzantine faults exceeds
$1/2$ or if the synchrony bound is violated.

In this work, we introduce a new approach for BFT protocol design called \emph{\name}. 
Our approach offers advantages in the two aspects above.
First, the \name approach enables protocols 
that tolerate more than $1/3$ (resp.\ $1/2$) corruption faults in the
partial-synchrony (resp.\ synchrony) model
--- provided that the number of Byzantine faults do not exceed the respective resilience bounds. 
Second, the \name approach allows a certain degree of separation between 
the fault model and the protocol design.
As a result, \name allows diverse clients with different 
fault assumptions and timing assumptions (synchrony or not)
to participate in the same protocol.
We elaborate on these two aspects below.

\paragraph{Stronger resilience.}
We introduce a mixed fault model with a new type of fault called
\emph{\pleasing} (\psing for short) faults. 
\Pleasing replicas actively try to disrupt the system from maintaining a safe consensus decision 
and they might arbitrarily deviate from the protocol for this purpose. 
However, if they cannot break safety,
they will not try to prevent the system from reaching a (safe) decision.
The rationale for this new type of fault is that 
violating safety may provide the attacker gains (e.g., a double spend attack)
but preventing liveness usually does not.
In fact, \psing replicas may gain rewards from keeping the replicated service
live, e.g., by collecting service fees.
We show a family of protocols 
that tolerate a combination of Byzantine and \psing faults 
that exceeds $1/3$ in the partially synchronous model
and exceeds $1/2$ in the synchronous model.
Our results do not violate existing resilience bounds 
because the fraction of Byzantine faults is always smaller than the respective bounds.

\paragraph{Diversity.}
The \name approach further provides certain separation between the fault model
and the protocol. The design approach builds a protocol whose transcript can be
interpreted by external clients with diverse beliefs, who draw different consensus
commit decisions based on their beliefs.
\name guarantees safety and liveness so far as the
clients' beliefs are correct; thus two clients with correct
assumptions agree with each other.
Clients specify (i) the fault threshold they need to tolerate, 
and (ii) the message delay bound, if any, they believe in.  
For example, one instance of \name can support a client that requires tolerance against $1/5$ Byzantine faults plus $3/10$ \psing faults, 
while simultaneously supporting another client who requires tolerance against $1/10$ Byzantine faults plus $1/2$ \psing faults,
and a third client who believes in synchrony and requires 
$3/10$ Byzantine plus $2/5$ \psing tolerance.

This novel separation of fault model from protocol design
can be useful in practice in several ways.
First, different clients may naturally hold different assumptions about the system.
Some clients may be more cautious and require a higher resilience than others;
some clients may believe in synchrony while others do not.
Moreover, even the same client may assume a larger fraction of faults 
when dealing with a \$1M transaction compared to a \$5 one. 
The rationale is that more replicas may be willing to collude
to double spend a high-value transaction.
In this case, the client can wait for more votes before committing the \$1M transaction.
Last but not least, a client may update its assumptions based on certain events it observes.
For example, if a client receives votes for conflicting values,
which may indicate an attempt at attacking safety,
it can start requiring more votes than usual;
if a client who believes in synchrony notices abnormally long message delays,
which may indicate an attack on network infrastructure,
it can update its synchrony bound to be more conservative or switch to a partial-synchrony assumption.

The notion of ``commit'' needs to be clarified in our new model.
Clients in \name have different assumptions and hence different commit rules.
It is then possible and common that a value is committed by one client but not another.
\name guarantees that any two clients whose assumptions are correct 
(but possibly different) commit to the same value.
If a client's assumption is incorrect, however,
it may commit inconsistent values which may later be reverted.
While this new notion of commit may sound radical at first, 
it is the implicit behavior of existing BFT protocols.
If the assumption made by the service administrator is violated in a classic BFT protocol
(e.g., there are more Byzantine faults than provisioned),
clients may commit to different values and they have no recourse.
In this sense, \name is a robust generalization of classic BFT protocols. 
In \name, if a client performs conflicting commits,
it should update its assumption to be more cautious
and re-interpret what values are committed under its new assumption.
In fact, this ``recovery'' behavior is somewhat akin to Bitcoin.
A client in Bitcoin decides how many confirmations are needed 
(i.e., how ``deeply buried'') to commit a block.
If the client commits but subsequently an alternative longer fork appears, its commit is reverted.
Going forward, the client may increase the number of confirmations it requires.

\paragraph{Key techniques.}
\name centers around two new techniques.
The first one is a novel synchronous BFT protocol with replicas
 executing at \emph{network speed}; 
that is, the protocol run by the replicas does not assume synchrony.
This allows clients in the same protocol 
to assume different message delay bounds and commit at their own
pace. 
The protocol thus 
 separates timing assumptions of replicas from timing assumptions of clients.
Note that this is only possible via \name's separation of
protocol from the fault model:
the action of committing is only carried out by clients, not by replicas.
The other technique involves a breakdown of the different roles that 
quorums play in different steps of partially synchronous BFT protocols.
Once again, made possible by the separation in \name,
we will use one quorum size for replicas to run a protocol,
and let clients choose their own quorum sizes for committing in the protocol.

\paragraph{Contributions.}
To summarize, our work has the following contributions.

\begin{enumerate}[topsep=8pt,itemsep=8pt]
				
\item 
\textbf{Alive-but-corrupt faults.} We introduce a new type of fault, called
\pleasing fault, which attack safety but not liveness.

\item \textbf{Synchronous BFT with network speed replicas.} 
	We present a synchronous protocol in which 
	only the commit step requires synchrony.
	Since replicas no longer perform commits in our approach,
	the protocol simultaneously supports clients assuming different synchrony bounds.
		
\item \textbf{Flexible Byzantine Quorums.} 
	We deconstruct existing BFT protocols to understand the role played by different quorums 
	and introduce the notion of Flexible Byzantine Quorums.
	A protocol based on Flexible Byzantine Quorums simultaneously supports clients assuming different fault models.

\item \textbf{One BFT Consensus Solution for the Populace.} 
	Putting the above together, we present a new approach for BFT design, \name.
	Our approach has stronger resilience and diversity: \name tolerates a fraction of combined (Byzantine plus \psing) faults 
	beyond existing resilience bounds. And clients with diverse fault
and timing beliefs are supported  in the same protocol. 

\end{enumerate}

\paragraph{Organization.}
The rest of the paper is organized as follows.
Section~\ref{sec:model} defines the \name model where replicas and clients are separated.
We will describe in more detail our key techniques 
for synchrony and partial-synchrony in
Sections~\ref{sec:overv-synchr-flex}~and~\ref{sec:overview-async}, respectively. 
Section~\ref{sec:protocol} puts these techniques together and presents the final protocol.
Section~\ref{sec:discussion} discusses the result obtained by the
Flexible BFT design and 
Section~\ref{sec:related-work} describes related work.


\section{Modeling \name}
\label{sec:model}

The goal of \name is to build a replicated service that takes requests from clients
and provides clients an interface of a single non-faulty server,
i.e., it provides clients with the same totally ordered sequence of values. 
Internally, the replicated service uses multiple servers,
also called replicas, to tolerate some number of faulty servers. 
The total number of replicas is denoted by $n$. In this paper, whenever we
speak about a set of replicas or messages, we denote the set size as its fraction over $n$.
For example, we refer to a set of $m$ replicas as ``$q$ replicas'' where $q=m/n$.

Borrowing notation from Lamport~\cite{lamport2006fast}, 
such a replicated service has three logical actors: 
\emph{proposers} capable of sending new values,
\emph{acceptors} who add these values to a totally ordered sequence (called a blockchain), 
and \emph{learners} who decide on a sequence of values based on the transcript of the protocol and execute them on a state machine. 
Existing  replication protocols provide the following two properties:

\begin{description}[topsep=8pt,itemsep=4pt]
\item[-] \textbf{Safety.} Any two learners learn the same sequence of values.
\item[-] \textbf{Liveness.} A value proposed by a proposer will
  eventually be executed by every learner.
\end{description}

In existing replication protocols, 
the learners are assumed to be \emph{uniform}, 
i.e., they interpret a transcript using the same rules 
and hence decide on the same sequence of values.
In \name, we consider diverse learners with different assumptions. 
Based on their own assumptions, 
they may interpret the transcript of the protocol differently. 
We show that so far as the assumptions 
of two different learners are both correct, 
they will eventually learn the same sequence of values.
A replication protocol in the \name approach
satisfies the following properties:

\begin{description}[topsep=8pt,itemsep=4pt]
\item[-] \textbf{Safety for diverse learners.} Any two
  learners with correct but potentially different assumptions
  learn the same sequence of values.
\item[-] \textbf{Liveness for diverse learners.} A value
  proposed by a proposer will eventually be executed by every learner
  with a correct assumption.
\end{description}

In a replicated service, clients act as proposers and learners, 
whereas the replicas (replicated servers) are acceptors.
Thus, safety and liveness guarantees are defined with respect to clients.

\paragraph{Fault model.} 
We assume two types of faults within the replicas: Byzantine and \emph{\pleasing} (\psing for short). 
Byzantine replicas behave arbitrarily. 
On the other hand, the goal of \psing replicas
is to attack safety but to preserve liveness.
These replicas will take any actions that help them break safety of the protocol.
However, if they cannot succeed in breaking safety, they will help provide liveness.
Consequently, in this new fault model, 
the safety proof should treat \psing replicas similarly to Byzantine.
Then, \emph{once safety is proved},
the liveness proof can treat \psing replicas similarly to honest.
We assume that the adversary is static, i.e., the adversary determines which
replicas are Byzantine and \psing before the start of the protocol. 

\paragraph{Other assumptions.} 
We assume hash functions, digital signatures and a public-key infrastructure (PKI).
We use $\sig{x}_R$ to denote a message $x$ signed by a replica $R$. 
We assume pair-wise communication channels between replicas. 
We assume that all replicas have clocks that advance at the same
rate.


\section{Synchronous  BFT with Network Speed Replicas - Overview}
\label{sec:overv-synchr-flex}
\begin{figure*}[ht]
\begin{boxedminipage}{\textwidth}
\paragraph{Protocol executed by the replicas.} 
\begin{enumerate}
\setlength\itemsep{0.5em}

\item \textbf{Propose. } \label{step:sync-propose} 
  The leader $L$ of view $v$ 
  proposes a value $b$.
\item \textbf{Vote.} \label{step:sync-vote}
  On receiving the first value $b$ in a view $v$, a replica broadcasts $b$ and votes for $b$ if it is \emph{safe} to do so, as determined by a locking mechanism described later. 
The replica records the following.
\begin{itemize}
\item[-] If the replica collects $\qmin$ votes on $b$, 
		denoted as $\CertN^\qmin_v(b)$ and called a certificate of $b$ from view $v$, 
		then it ``locks'' on $b$ and records the lock time as $\TimeLock_v$.
\item[-] If the replica observes an equivocating value signed by $L$ at any time after entering view $v$, it records the time of equivocation as $\TimeEquiv_v$. It blames the leader by broadcasting $\sig{\Blame, v}$ and the equivocating values.
\item[-] If the replica does not receive a proposal for sufficient time in view $v$, it times out and broadcasts $\sig{\Blame, v}$. 
\item[-] If the replica collects a set of $\qmin$ $\sig{\Blame, v}$ messages, 
it records the time as $\TimeBlame_v$, broadcasts them and enters view $v+1$. 
\end{itemize}
\end{enumerate}

If a replica locks on a value $b$ in a view, then it votes only for $b$ in subsequent views 
unless it ``unlocks'' from $b$ by learning that $\qmin$ replicas
are not locked on $b$ in that view or higher views (they may be locked on other values or they may not be locked at all).

\paragraph{Commit rules for clients.}
A value $b$ is said to be committed by a client assuming $\Delta$-synchrony 
iff $\qmin$ replicas
each report that there exists a view $v$ such that, 
\begin{enumerate}
\item $b$ is certified, i.e., $\CertN^\qmin_v(b)$ exists.
\item the replica observed an undisturbed-$2\Delta$ period after certification, i.e., 
		no equivocating value or view change was observed at a time before $2\Delta$ after it was certified, 
		or more formally, $\min(\TimeCurrent, \TimeEquiv_v, \TimeBlame_v) - \TimeLock_v \geq 2\Delta$
\end{enumerate}
\end{boxedminipage}
\caption{Synchronous BFT with network speed replicas.}
\label{fig:sync-bft}
\end{figure*}

Early synchronous protocols~\cite{DS83,katz2009expected,micali2017optimal} have relied on synchrony in two ways.
First, the replicas assume a maximum network delay $\Delta$ for communication between them. Second, they require a lock step execution, i.e., all replicas are in the same round at the same time.
Hanke et al. showed a synchronous protocol without lock step execution~\cite{hanke2018dfinity}. 
Their protocol still contains a synchronous step in which all replicas perform a blocking wait of $2\Delta$ time before proceeding to subsequent steps. 
Sync HotStuff~\cite{abraham2019sync} improves on it further to remove replicas' blocking waits during good periods (when the leader is honest),
but blocking waits are still required by replicas during bad situations (view changes).

In this section, we show a synchronous protocol where the
replicas do not ever have blocking waits and execute at the network speed.
In other words, replicas run a partially synchronous protocol and do not rely on synchrony at any point. 
Clients, on the other hand, rely on synchrony bounds to commit.
This separation is what allows our protocol to support clients with different assumptions on the value of $\Delta$. 
To the best of our knowledge, this is the first synchronous protocol to achieve such a separation.
In addition, the protocol tolerates a combined Byzantine plus \psing fault ratio greater than a half (Byzantine fault tolerance is still less than half). 

For simplicity, in this overview, we show a protocol for single shot consensus.
In our final protocol in Section~\ref{sec:protocol}, we will consider a pipelined version of the protocol for consensus on a sequence of values. 
We do not consider termination for the single-shot consensus protocol in this overview 
because our final replication protocol is supposed to run forever. 

The protocol is shown in Figure~\ref{fig:sync-bft}. It runs in a sequence of views. Each view has a designated leader who may be selected in a round robin order. The leader drives consensus in that view.
In each view, the protocol runs in two steps -- propose and vote. In the propose step, the leader proposes a value $b$. In the vote step, replicas vote for the value if it is \emph{safe} to do so.
The vote also acts as a \emph{re-proposal} of the value.
If a replica observes a set of $\qmin$ votes on $b$, called a
certificate $\Cert(b)$, it ``locks'' on $b$. 
For now, we assume $\qmin = 1/2$.
(To be precise, $\qmin$ is slight larger than 1/2, e.g., $f+1$ out of $2f+1$.)
We will revisit the choice
of $\qmin$ in Section~\ref{sec:discussion}.
In subsequent views, a replica will not vote for a value other than $b$ unless it learns that $\qmin$ replicas are not locked on $b$. In addition, the replicas switch views (i.e., change leader) if they either observe an equivocation or if they do not receive a proposal from the leader within some timeout. 
A client commits $b$ if $\qmin$ replicas state that there exists a view in which $b$ is certified 
and no equivocating value or view change was observed at a time before $2\Delta$ after it was certified. 
Here, $\Delta$ is the maximum network delay the client believes in. 

The protocol ensures safety if there are fewer than $\qmin$ faulty replicas.
The key argument for safety is the following: 
If an honest replica $h$ satisfies the commit condition for some value $b$ in a view, then 
(a) no other value can be certified and 
(b) all honest replicas are locked on $b$ at the end of that view.
To elaborate, satisfying the commit condition implies that some honest replica $h$ has observed an undisturbed-$2\Delta$ period after it locked on $b$, i.e., it did not observe an equivocation or a view change. 
Suppose the condition is satisfied at time $t$. 
This implies that other replicas did not observe an equivocation or a view change before $t-\Delta$. 
The two properties above hold if the quorum honesty conditions described below hold.
For liveness, if Byzantine leaders equivocate or do not propose a safe value, they will be blamed by both honest and \psing replicas and a view change will ensue. 
Eventually there will be an honest or \psing leader to drive consensus if quorum availability holds.

\begin{description}[topsep=8pt,itemsep=4pt]
\item[Quorum honesty (a) within a view.] 
Since the undisturbed period starts after $b$ is certified, 
$h$ must have voted (and re-proposed) $b$ at a time earlier than $t-2\Delta$. 
Every honest replica must have received $b$ before $t - \Delta$.
Since they had not voted for an equivocating value by then, they must have voted for $b$.
Since the number of faults is less than $\qmin$,
every certificate needs to contain an honest replica's vote.
Thus, no certificate for any other value can be formed in this view. 
		
\item[Quorum honesty (b) across views.] 
$h$ sends $\CertN^\qmin_v(b)$ at time $t-2\Delta$. 
All honest receive $\CertN^\qmin_v(b)$ by time $t-\Delta$ and become locked on $b$. 
For an honest replica to unlock from $b$ in subsequent views, 
$\qmin$ replicas need to claim that they are not locked on $b$. 
At least one of them is honest 
and would need to falsely claim it is not locked, which cannot happen.

\item[Quorum availability.]	Byzantine replicas do not exceed $1-\qmin$ so that $\qmin$ replicas respond to the leader. 
\end{description}

\paragraph{Tolerating \psing faults.}
If we have only honest and Byzantine replicas (and no \psing replicas), quorum honesty requires the fraction of Byzantine replicas $B < \qmin$. Quorum availability requires $B \leq 1-\qmin$. If we optimize for maximizing $B$, we obtain $\qmin \geq 1/2$.
Now, suppose $P$ represents the fraction of \psing replicas. Quorum honesty requires $B + P < \qmin$, and quorum availability requires $B \leq 1-\qmin$. Thus, the protocol supports varying values of $B$ and $P$ at different values of $\qmin > 1/2$ such that safety and liveness are both preserved.

\paragraph{Separating client synchrony assumption from the replica protocol.}
The most interesting aspect of this protocol is the separation of the client commit rule from
the protocol design. In particular, although this is a synchronous protocol, the replica protocol does not rely on any synchrony bound. This allows clients to choose their own message delay bounds. 
Any client that uses a correct message delay bound enjoys safety. 


\section{Flexible Byzantine Quorums for Partial Synchrony - Overview}
\label{sec:overview-async}

In this section, we explain the high-level insights of Flexible
Byzantine Quorums in \name. 
Again, for ease of exposition, we
focus on a single-shot consensus and do not consider termination.  
We start by reviewing the Byzantine Quorum
Systems~\cite{Malkhi:1997:BQS:258533.258650} that underlie
existing partially synchronous protocols  
that tolerate 1/3 Byzantine faults (Section~\ref{sec:byz-quorums}). 
We will illustrate that multiple uses of 2/3-quorums 
actually serve different purposes in these protocols.
We then generalize these protocols to use  
\emph{Flexible Byzantine Quorums}~(Section~\ref{sec:flexible-byz-quorum}),
the key idea that enables more than 1/3 fault tolerance 
and allows diverse clients with varying assumptions to co-exist. 

\subsection{Background: Quorums in PBFT}
\label{sec:byz-quorums}

Existing protocols for solving consensus in the partially synchronous setting 
with optimal $1/3$-resilience
revolve around voting by \emph{Byzantine quorums} of replicas. 
Two properties of Byzantine quorums are utilized for achieving safety and liveness.
First, any two quorums intersect at one honest replica -- quorum intersection.
Second, there exists a quorum that contains no Byzantine faulty replicas -- quorum availability.
Concretely, when less than $1/3$ the replicas are Byzantine,
quorums are set to size $\qmin=2/3$.
(To be precise, $\qmin$ is slightly larger than 2/3, 
i.e., $2f+1$ out of $3f+1$ where $f$ is the number of faults,
but we will use $\qmin=2/3$ for ease of exposition.)
This guarantees an intersection of size at least $2\qmin-1=1/3$, 
hence at least one honest replica in the intersection.
As for availability, there exist $\qmin=2/3$ honest replicas to form
a quorum.

To dissect the use of quorums in BFT protocols, consider their use in
PBFT~\cite{castro1999practical} for providing safety and liveness. 
PBFT operates in a view-by-view manner.
Each view has a unique leader and consists of the following steps: 

\begin{itemize}

\item[-] \textbf{Propose.} 
  A leader $L$ proposes a value $b$.

\item[-] \textbf{Vote 1.} 
  On receiving the first value $b$ for a view $v$, a replica votes for $b$ 
  if it is \emph{safe}, as determined by a locking mechanism described below.
  A set of $\qmin$ votes form a certificate $\CertN^{\qmin}(b)$.

\item[-] \textbf{Vote 2.} 
  On collecting $\CertN^{\qmin}(b)$,
  a replica ``locks'' on $b$ and votes for $\CertN^{\qmin}(b)$.

\item[-] \textbf{Commit.} 
  On collecting $\qmin$ votes for $\CertN^{\qmin}(b)$, a client learns that proposal
$b$ becomes a committed decision. 

\end{itemize}
If a replica locks on a value $b$ in a view, then it votes only for $b$ in subsequent views 
unless it ``unlocks'' from $b$.
A replica ``unlocks'' from $b$ if it learns that $\qmin$ replicas are \emph{not}
locked on $b$ in that view or higher 
(they may be locked on other values or they may not be locked at all).

The properties of Byzantine quorums are harnessed in PBFT for safety and liveness as follows:

\begin{description}[topsep=8pt,itemsep=4pt]

\item[Quorum intersection within a view.]
Safety within a view is ensured by the first round of votes.
A replica votes only once per view.
For two distinct values to both obtain certificates, 
one honest replica needs to vote for both, which cannot happen.

\item[Quorum intersection across views.]
Safety across views is ensured by the locking mechanism.
If $b$ becomes a committed decision in a view, 
then a quorum of replicas lock on $b$ in that view.
For an honest replica among them to unlock from $b$,
a quorum of replicas need to claim they are not locked on $b$.
At least one replica in the intersection is honest 
and would need to falsely claim it is not locked, which cannot happen.

\item[Quorum availability within a view.]
Liveness within each view is guaranteed by having an honest quorum respond to a
non-faulty leader.
\end{description}

\subsection{Flexible Byzantine Quorums}
\label{sec:flexible-byz-quorum}

Our \name approach separates the quorums used in 
BFT protocols for the replicas (acceptors) from the quorums used for learning when a decision becomes
committed. 
More specifically,
we denote the quorum used for forming certificates (locking) by $\qlock$
and the quorum used for unlocking by $\qunlock$. 
We denote the quorum employed by clients for learning certificate uniqueness by
$\qunique$, and the quorum used for learning commit safety by $\qcommit$. 
In other words, clients mandate $\qunique$ first-round votes and $\qcommit$ second-round votes in order
to commit a decision. 
Below, we outline a modified PBFT-like protocol that uses these different quorum sizes instead of a single quorum size $q$.
We then introduce a new definition, Flexible Byzantine Quorums, that
capture the requirements needed for these quorums to provide
safety and liveness.

\begin{figure}[h]
\centering
\begin{boxedminipage}{\columnwidth}
\begin{itemize}[itemsep=4pt,leftmargin=*]

\item[-] \textbf{Propose.} 
  A leader $L$ proposes a value $b$.

\item[-] \textbf{Vote 1.} 
  On receiving the first value $b$ for a view $v$, a replica votes for $b$ 
  if it is \emph{safe}, as determined by a locking mechanism described below.
  A set of $\qlock$ votes forms a certificate $\CertN^{\qlock}(b)$.

\item[-] \textbf{Vote 2.} 
  On collecting $\CertN^{\qlock}(b)$,
  a replica ``locks'' on $b$ and votes for $\CertN^{\qlock}(b)$.

\item[-] \textbf{Commit.} 
  On collecting $\qunique$ votes for $b$ and $\qcommit$ votes for $\CertN^{\qlock}(b)$, 
  a client learns that proposal $b$ becomes a committed decision. 

\end{itemize}
If a replica locks on a value $b$ in a view, then it votes only for $b$ in subsequent views 
unless it ``unlocks'' from $b$ by learning that $\qunlock$
 replicas are not locked on $b$.
\end{boxedminipage}
\end{figure}

\begin{description}[topsep=8pt,itemsep=4pt]
\item[Flexible quorum intersection (a) within a view.]
Contrary to PBFT, in \name, a pair of $\qlock$ certificates need not necessarily
intersect in an honest replica. 
Indeed, locking on a value does not preclude conflicting locks.
It only mandates that every $\qlock$ quorum
intersects with every $\qunique$ quorum at at least one honest replica. 
For safety, it is essential
that the fraction of faulty replicas is less than $\qlock+\qunique-1$.

\item[Flexible quorum intersection (b) across views.]
If a client commits a value $b$ in a view, 
$\qcommit$ replicas lock on $b$ in that view.
For an honest replica among them to unlock from $b$,
$\qunlock$ replicas need to claim they are not locked on $b$.
This property mandates that every $\qunlock$ quorum intersects
with every $\qcommit$ quorum at at least one honest replica.
Thus, for safety, it is essential that the fraction of faulty
replicas is less than $\qunlock + \qcommit - 1$. 

\item[Flexible quorum availability within a view.]
For liveness, Byzantine replicas cannot exceed 
{$1-\max(\qunique, \qcommit, \qlock, \qunlock)$} 
so that the aforementioned quorums can be formed at different stages of the protocol. 
\end{description}

Given the above analysis, \name ensures safety  
if the fraction of faulty replicas is less than
$\min(\qunique + \qlock - 1, \qcommit + \qunlock - 1)$,
and provides liveness if the fraction of Byzantine replicas 
is at most $1-\max(\qunique, \qcommit, \qlock, \qunlock)$.
It is optimal to use \emph{balanced quorum sizes}
where $\qlock = \qunlock$ and $\qunique = \qcommit$.
To see this, first note that we should make sure 
$\qunique + \qlock = \qcommit + \qunlock$;
otherwise, suppose the right-hand side is smaller, 
then setting $(\qcommit,\qunlock)$ to equal $(\qunique,\qlock)$
improves safety tolerance without affecting liveness tolerance.
Next, observe that if we have $\qunique + \qlock = \qcommit + \qunlock$
but $\qlock > \qunlock$ (and hence $\qunique < \qcommit$),
then once again setting $(\qcommit,\qunlock)$ to equal $(\qunique,\qlock)$
improves safety tolerance without affecting liveness tolerance.

Thus, in this paper, we set $\qlock = \qmin$ and $\qunique =
\qcommit = \q$. Since replicas use $\qmin$ votes to lock,
these votes can always be used by the clients to commit
$\qcommit$ quorums. 
Thus, $\q \geq \qmin$.
The Flexible Byzantine Quorum requirements collapse
into the following two conditions.
\begin{description}
\item[Flexible quorum intersection.]
The fraction of faulty replicas is $< \q + \qmin - 1$.

\item[Flexible quorum availability.]
The fraction of Byzantine replicas is $\leq 1-\q$.
\end{description}

\paragraph{Tolerating \psing faults.}
If all faults in the system are Byzantine faults, 
then the best parameter choice is $\q=\qmin \geq 2/3$
for $<1/3$ fault tolerance,
and Flexible Byzantine Quorums degenerate to basic Byzantine quorums.
However, in our model, \psing replicas are only interested in attacking safety but not liveness.
This allows us to tolerate  
$\q + \qmin - 1$ total faults (Byzantine plus \psing), which can be more than $1/3$. 
For example, if we set $\qmin = 0.7$ and $\q = 0.8$, 
then such a protocol can tolerate $0.2$ Byzantine faults plus
$0.3$ \psing faults.
We discuss the choice for $\qmin$ and $\q$ and their rationale in Section~\ref{sec:discussion}.

\paragraph{Separating client commit rules from the replica protocol.}
\label{sec:client-coexist-partial}
A key property of the Flexible Byzantine Quorum approach is that
it decouples the BFT protocol from 
client commit rules. 
The decoupling allows 
clients assuming different fault models
to utilize the same protocol.
In the above protocol, the propose and two voting steps 
are executed by the replicas and they are only parameterized by $\qmin$.
The commit step can be carried by different clients using different commit thresholds $\q$. 
Thus, a fixed $\qmin$ determines a possible set of clients with
varying commit rules (in terms of Byzantine and \psing adversaries).
Recall that a Byzantine adversary can behave arbitrarily and thus may not
provide liveness whereas an \psing adversary only
intends to attack safety but not liveness.
Thus, a client who believes that a large fraction of
the adversary may attempt to break safety, not progress, can choose a larger $\q$.
By doing so, it seeks stronger safety against dishonest replicas, while trading 
 liveness.
Conversely, a client that assumes that a large fraction of the adversary
attacks liveness must choose a smaller $\q$.


\section{Flexible BFT Protocol}
\label{sec:protocol}

In this section, we combine the ideas presented in
Sections~\ref{sec:overv-synchr-flex} and \ref{sec:overview-async}
to obtain a final protocol that supports both types of clients.
A client can either assume partial synchrony, with freedom to choose $\q$ as described
in the previous section, or assume synchrony with its own choice of
$\Delta$, as described in Section~\ref{sec:overv-synchr-flex}. 
Replicas execute a protocol at the network speed
with a parameter $\qmin$.
We first give the protocol executed by the replicas 
and then discuss how clients commit depending on their assumptions. 
Moreover, inspired by Casper~\cite{DBLP:journals/corr/abs-1710-09437} and
HotStuff~\cite{yin2018hotstuff}, we show a protocol where the
rounds of voting can be pipelined.

\subsection{Notation}
Before describing the protocol, we will first define some
data structures and terminologies that will aid presentation.

\paragraph{Block format.} 
The pipelined protocol forms a chain of values.
We use the term \emph{block} to refer to each value in the chain.
We refer to a block's position in the chain as its \emph{height}. 
A block $\block_k$ at height $k$ has the following format
$$\block_k := (b_k, h_{k-1})$$ 
where $b_k$ denotes a proposed value at height $k$ 
and $h_{k-1} := H(\block_{k-1})$ is a hash digest of the predecessor block.
The first block $\block_1=(b_1, \bot)$ has no predecessor.
Every subsequent block $\block_k$ must specify a predecessor block $\block_{k-1}$ by including a hash of it.
We say a block is \emph{valid} if 
(i) its predecessor is valid or $\bot$, and 
(ii) its proposed value meets application-level validity conditions and is consistent with its chain of ancestors (e.g., does not double spend a transaction in one of its ancestor blocks). 

\paragraph{Block extension and equivocation.}
We say $\block_l$ \emph{extends} $\block_k$, 
if $\block_k$ is an ancestor of $\block_l$ ($l>k$). 
We say two blocks $\block_l$ and $\block'_{l'}$ \emph{equivocate} one another 
if they are not equal and do not extend one another.

\paragraph{Certificates and certified blocks.} 
In the protocol, replicas vote for blocks by signing them.
We use $\Cert_{\ViewNumber}{(\block_{k})}$ to denote a set of signatures 
on $h_{k} = H(\block_{k})$ by $\qmin$ replicas in view $\ViewNumber$. 
$\qmin$ is a parameter fixed for the protocol instance. 
We call $\Cert_{\ViewNumber}{(\block_{k})}$ a certificate for $\block_{k}$ from view $v$.
Certified blocks are ranked 
first by the views in which they are certified and then by their heights.
In other words, a block $\block_k$ certified in view $\ViewNumber$ 
is ranked \emph{higher} than a block $\block_{k'}$ certified in view $\ViewNumber'$
if either (i) $\ViewNumber > \ViewNumber'$ or (ii) $\ViewNumber = \ViewNumber'$ and $k>k'$. 

\paragraph{Locked blocks.} 
At any time, a replica locks the highest certified block to its knowledge.
During the protocol execution, each replica keeps track of all signatures for all blocks
and keeps updating its locked block.
Looking ahead, the notion of locked block will be
used to guard the safety of a client commit.

\subsection{Replica Protocol}

The replica protocol progresses in a view-by-view fashion. Each view has a
designated leader who is responsible for driving consensus on a
sequence of blocks. Leaders can be chosen
statically, e.g., round robin, 
or randomly using more sophisticated techniques~\cite{CKS05,Algorand}. 
In our description, we assume a round robin selection of leaders,
i.e.,  ($\ViewNumber$ {\sf mod} $n$) is the leader of view $\ViewNumber$. 

At a high level, the protocol does the following: 
The leader proposes a block to all replicas. 
The replicas vote on it if safe to do so. 
The block becomes certified once $\qmin$ replicas vote on it. 
The leader will then propose another block extending the previous one, 
chaining blocks one after another at increasing heights.
Unlike regular consensus protocols
where replicas determine when a block is committed, in
\name, replicas only certify blocks while committing is
offloaded to the clients.
If at any time replicas detect malicious leader behavior or lack of
progress in a view, they blame the leader 
and engage in a view change protocol 
to replace the leader and move to the next view. 
The new leader collects a status from different replicas and continues to propose
blocks based on this status.
We explain the steady state and view change protocols in more detail below.

\begin{figure*}[tb]
\centering
\begin{boxedminipage}{\textwidth}

Let $\ViewNumber$ be the current view number and 
replica $L$ be the leader in this view. 
Perform the following steps in an iteration.

\begin{enumerate}[topsep=8pt,itemsep=8pt,leftmargin=*]
\setlength\itemsep{0.5em}
\item \textbf{Propose. } \label{step:propose} 
\Comment{Executed by the leader of view $\ViewNumber$}

The leader $L$ broadcasts $\sig{\Propose, \block_k, \ViewNumber, \Cert_{\ViewNumber'}(\block_{k-1}), \StatusCert}_L$.
Here, $\block_k := (b_k, h_{k-1})$ is the newly proposed block
and it should extend the highest certified block known to $L$.
In the steady state, an honest leader $L$ would extend the
previous block it proposed,
in which case $\ViewNumber'=\ViewNumber$ and $\StatusCert = \bot$. 
Immediately after a view change, 
$L$ determines the highest certified block from
the status $\StatusCert$ received during the view change. 

\item \label{step:vote} \textbf{Vote.} 
\Comment{Executed by all replicas}

When a replica $R$ receives a valid proposal 
$\sig{\Propose, \block_k, \ViewNumber, \Cert_{\ViewNumber'}(\block_{k-1}), \StatusCert}_L$
from the leader $L$, $R$ broadcasts the proposal and a vote $\sig{\Vote, \block_k, \ViewNumber}_R$ 
if (i) the proposal is the first one in view $\ViewNumber$,
and it extends the highest certified block in $\StatusCert$, 
or (ii) the proposal extends the last proposed block in the view. 


In addition, replica $R$ records the following based on the messages it receives.
\begin{itemize}[topsep=8pt,itemsep=4pt]
\item[-] $R$ keeps track of the number of
  votes received for this block in this view as $q_{\block_{k},\ViewNumber}$. 
\item[-] If block $\block_{k-1}$ has been proposed in view
  $\ViewNumber$, $R$ marks $\block_{k-1}$ as a locked block and
  records the locked time as $\TimeAccept_{k-1,\ViewNumber}$.
\item[-] If a block equivocating $\block_{k-1}$
  is proposed by $L$ in view $\ViewNumber$ (possibly received through a vote),
  $R$ records the time $\TimeEquiv_{k-1,\ViewNumber}$ at
  which the equivocating block is received. 
\end{itemize}

The replica then enters the next iteration.
If the replica observes no progress or equivocating blocks in the same view $\ViewNumber$, 
it stops voting in view $\ViewNumber$ 
and sends $\sig{\Blame,\ViewNumber}_r$ message to all replicas. 
\end{enumerate}

\end{boxedminipage}
\caption{Flexible BFT steady state protocol.}
\label{fig:steady}
\end{figure*}

\paragraph{Steady state protocol.}
The steady state protocol is described in Figure~\ref{fig:steady}.
In the steady state, there is a unique leader who, in an iteration,
proposes a block, waits for votes from $\qmin$
replicas and moves to the next iteration.
In the steady state, an honest leader always extends the previous block it proposed.
Immediately after a view change, 
since the previous leaders could have been
Byzantine and may have proposed equivocating blocks, 
the new leader needs to determine a safe block to propose. 
It does so by collecting a status of
locked blocks from $\qmin$ replicas denoted by $\StatusCert$
(described in the view change protocol).

For a replica $R$ in the steady state, on receiving a proposal for block
$\block_{k}$, a replica votes for it if 
it extends the previous proposed block in the view or 
if it extends the highest certified block in $\StatusCert$.
Replica $R$ can potentially receive blocks out of
order and thus receive $\block_{k}$ before its ancestor blocks. 
In this case, replica $R$ waits until it receives the ancestor blocks, 
verifies the validity of those blocks and $\block_k$ before voting for $\block_k$.
In addition, replica $R$ records the following to aid a client commit:
\begin{itemize}
\item[-] \textbf{Number of votes.} It records the number of votes received
  for $\block_{k}$ in view $\ViewNumber$ as $q_{\block_{k},
    \ViewNumber}$. Observe that votes are broadcast by 
  all replicas and the number of votes for a block can be greater
  than $\qmin$. $q_{\block_{k}, \ViewNumber}$ will be updated each time the
  replica hears about a new vote in view $\ViewNumber$.
\item[-] \textbf{Lock time.} If $\block_{k-1}$ was proposed in the
  same view $\ViewNumber$, it locks $\block_{k-1}$ and records the locked time as
  $\TimeAccept_{k-1, \ViewNumber}$.
\item[-] \textbf{Equivocation time.} If the replica ever observes an
  equivocating block at height $k$ in view $\ViewNumber$ through a
  proposal or vote, 
  it stores the time of equivocation as $\TimeEquiv_{k,\ViewNumber}$.
\end{itemize}
Looking ahead, the locked time $\TimeAccept_{k-1,\ViewNumber}$ and
equivocation time $\TimeEquiv_{k-1,\ViewNumber}$ will
be used by clients with synchrony assumptions to commit,
and the number of votes $q_{\block_{k}, \ViewNumber}$
will be used by clients with partial-synchrony assumptions to commit.

\paragraph{Leader monitoring.}
If a replica detects a lack of progress in view $\ViewNumber$ or
observes malicious leader behavior
such as more than one height-$k$ blocks in the same view,
it blames the leader of view $\ViewNumber$
by broadcasting a $\sig{\Blame, \ViewNumber}$ message. It quits
view $\ViewNumber$ and stops voting and broadcasting blocks in
view $\ViewNumber$. 
To determine lack of progress, the replicas may simply guess a time bound for
message arrival or use increasing timeouts for each view~\cite{castro1999practical}.

\paragraph{View change.}
The view change protocol is described in Figure~\ref{fig:vc}.
If a replica gathers $\qmin$ $\sig{\Blame, \ViewNumber}$
messages from distinct replicas,
it forwards them to all other replicas and enters a new view
$\ViewNumber + 1$ (Step~\ref{step:new_view}). It records the time
at which it received the blame certificate as $\TimeBlame_{\ViewNumber}$.
Upon entering a new view,
a replica reports to the leader of the new view $L'$ its locked block 
and transitions to the steady state (Step~\ref{step:status}). 
$\qmin$ status messages form the status $\StatusCert$.
The first block $L'$ proposes in the new view 
should extend the highest certified block among these $\qmin$ status messages.

\begin{figure*}[htbp]
\centering
\begin{boxedminipage}{\textwidth}

Let $L$ and $L'$ be the leaders of views $\ViewNumber$ and $\ViewNumber+1$, respectively.
\begin{enumerate}[topsep=8pt,itemsep=8pt,leftmargin=*,label=(\roman*)]
\setlength\itemsep{0.5em}
\item \label{step:new_view} \textbf{New-view.}
Upon gathering $\qmin$ $\sig{\Blame, \ViewNumber}$ messages,
broadcast them and enter view $\ViewNumber+1$. Record the time as $\TimeBlame_{\ViewNumber}$.

\item \label{step:status}  \textbf{Status.}
Suppose $\block_j$ is the block locked by the replica. Send a
status of  its locked block to the leader $L'$ using
$\sig{\Status, v, \block_j, \Cert_{v'}(\block_j)}$ and transition
to the steady state. Here, $v'$ is the view in which $\block_j$
was certified.

\end{enumerate}

\end{boxedminipage}
\caption{Flexible BFT view change protocol.}
\label{fig:vc}
\end{figure*}

\subsection{Client Commit Rules}
\label{sec:commit-rules}

\begin{figure*}[htbp]
  \centering
  \begin{boxedminipage}{\textwidth}
    \begin{enumerate}[itemsep=8pt,leftmargin=0.2cm,label=]
      
      \item \textbf{(CR1) Partially-synchronous commit.} A
        block $\block_k$ is committed under the partially synchronous rule with parameter $\q$  
        iff there exist $l \geq k$ and $v$ such that
        \begin{enumerate}[topsep=8pt,itemsep=4pt]
				\item $\Cert_{\ViewNumber}(\block_l)$ and $\Cert_{\ViewNumber}(\block_{l+1})$ 
						exist where $\block_{l+1}$ extends $\block_l$ and $\block_k$ (if $l = k$, $\block_l = \block_k$).
        \item $q_{\block_{l}, \ViewNumber} \geq \q$ and $q_{\block_{l+1}, \ViewNumber} \geq \q$.
        \end{enumerate}
        
		\item \textbf{(CR2) Synchronous commit.}
        A block $\block_k$ is committed assuming $\Delta-$synchrony 
		iff the following holds for $\qmin$ replicas.
		There exist $l \geq k$ and $v$ (possibly
                different across replicas) such that,
        \begin{enumerate}[topsep=8pt,itemsep=4pt]
        \item $\Cert_{\ViewNumber}(\block_l)$ exists where $\block_l$ extends $\block_k$ (if $l = k$, $\block_l = \block_k$).
        \item An undisturbed-$2\Delta$ period is observed after
          $\block_{l+1}$ is obtained, i.e., no equivocating block
          or view change of view $v$ were
           observed before $2\Delta$ time after $\block_{l+1}$
           was obtained, i.e.,
        $$\min(\TimeCurrent, \TimeEquiv_{l,\ViewNumber},
        \TimeBlame_\ViewNumber) - \TimeAccept_{l,\ViewNumber}
        \geq 2\Delta$$
        \end{enumerate}
    \end{enumerate}
  \end{boxedminipage}
  \caption{Flexible BFT commit rules}
  \label{fig:commit-rules}
\end{figure*}

As mentioned in the introduction, \name supports 
clients with different assumptions. 
Clients in \name learn the
state of the protocol from the replicas and based on their
own assumptions determine whether a block has been committed.
Broadly, we supports two types of clients:
those who believe in synchrony and those who believe in partial synchrony.

\subsubsection{Clients with Partial-Synchrony Assumptions (CR1)}
A client with partial-synchrony assumptions deduces whether a block has been committed
by based on the number of votes received by a block. 
A block $\block_l$ (together with its ancestors) is committed with parameter $\q$
iff $\block_l$ and its immediate
successor both receive $\geq \q$ votes in the same view. 

\paragraph{Safety of CR1.}
A CR1 commit based on $\q$ votes is safe 
against $<\q+\qmin-1$ faulty replicas (Byzantine plus \psing).
Observe that if $\block_l$ gets $\q$ votes in view $\ViewNumber$,  
due to flexible quorum intersection, 
a conflicting block cannot be certified in view $\ViewNumber$, 
unless $\geq \q+\qmin-1$ replicas are faulty. 
Moreover, $\block_{l+1}$ extending $\block_l$ 
has also received $\q$ votes in view $\ViewNumber$. 
Thus, $\q$ replicas lock block $\block_l$ in view $\ViewNumber$.
In subsequent views, honest replicas that have locked 
$\block_l$ will only vote for a block 
that equals or extends $\block_l$ unless they unlock. 
However, due to flexible quorum intersection, 
they will not unlock 
unless $\geq \q+\qmin-1$ replicas are faulty. 
Proof of Lemma~\ref{lemma:unique-cert} formalizes this argument. 

\subsubsection{Client with Synchrony Assumptions (CR2)}
Intuitively, a CR2 commit involves $\qmin$ replicas collectively stating that 
no ``bad event'' happens within ``sufficient time'' in a view.
Here, a bad event refers to either leader equivocation or view change
(the latter indicates sufficient replicas believe leader is
faulty) and the ``sufficient time'' is $2\Delta$; 
where $\Delta$ is a synchrony bound chosen by the client. 
More formally, a replica states that a synchronous commit for block $\block_k$
for a given parameter $\Delta$ (set by a client) is satisfied iff the following holds.
There exists $\block_{l+1}$ that extends $\block_l$ and $\block_k$,
and the replica observes an undisturbed-$2\Delta$ period after
obtaining $\block_{l+1}$ during which (i) no equivocating block
is observed, and (ii) no blame certificate/view
change certificate for view $v$ was obtained, i.e., 
$$\min(\TimeCurrent, \TimeEquiv_{l,\ViewNumber},
        \TimeBlame_\ViewNumber) - \TimeAccept_{l,\ViewNumber}
        \geq 2\Delta$$
where $\TimeEquiv_{l,\ViewNumber}$ denotes the time 
equivocation for $\block_l$ in view $\ViewNumber$ was
observed ($\infty$ if no equivocation), $\TimeBlame_\ViewNumber$
denotes the time at which view change happened from view $\ViewNumber$
to $\ViewNumber + 1$ ($\infty$ if no view change has happened
yet), and $\TimeAccept_{l,\ViewNumber}$ denotes the time at which 
$\block_l$ was locked (or $\block_{l+1}$ was proposed) in
view $\ViewNumber$.
Note that the client does not require the $\qmin$ fraction of replicas 
to report the same height $l$ or view $\ViewNumber$.

\paragraph{Safety of CR2.}
A client believing in synchrony assumes that all messages between replicas
arrive within $\Delta$ time after they were sent. 
If the client's chosen $\Delta$ is a correct upper bound on message delay, 
then a CR2 commit is safe against $\qmin$ faulty replicas (Byzantine plus \psing),
as we explain below.
If less than $\qmin$ replicas are faulty, at least one honest replica reported 
an \emph{undisturbed-$2\Delta$} period.
Let us call this honest replica $h$ and 
analyze the situation from $h$'s perspective
to explain why an undisturbed $2\Delta$ period ensures safety.
Observe that replicas in \name forward the proposal when voting. 
If $\Delta$-synchrony holds, every other honest replica 
learns about the proposal $\block_l$ at most $\Delta$ time after $h$ learns about it.
If any honest replica voted for a conflicting block 
or quit view $\ViewNumber$, $h$ would have known within $2\Delta$ time. 

\subsection{Safety and Liveness}
\label{sec:proof}

We introduce the notion of \emph{direct} and \emph{indirect} commit to aid the proofs.
We say a block is committed \emph{directly} under \textbf{CR1} if
the block and its  immediate successor both get $\q$ votes in the same view.
We say a block is committed \emph{directly} under \textbf{CR2} if some honest replica
reports an undisturbed-$2\Delta$ period after its successor block
was obtained.
We say a block is committed \emph{indirectly} if neither condition applies to it but
it is committed as a result of a block extending it being committed directly.
We remark that the direct commit notion, especially for \textbf{CR2}, is merely a proof technique.
A client cannot tell whether a replica is honest,
and thus has no way of knowing whether a block is directly committed under \textbf{CR2}. 

\begin{lemma}
If a client directly commits a block $\block_l$ in view $v$ using a correct commit rule,
then a certified block that ranks no lower than $\Cert_{v}(\block_{l})$ must equal or extend $\block_l$. 
\label{lemma:unique-cert}
\end{lemma}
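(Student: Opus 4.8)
The plan is to prove the statement by strong induction on the rank of the certified block in question. Writing $B = \Cert_{v'}(\block_{k'})$ for a certified block whose rank is no lower than $\Cert_{v}(\block_{l})$ --- that is, either $v' > v$, or $v' = v$ and $k' \geq l$ --- I would take as inductive hypothesis that every certified block whose rank lies strictly between $\Cert_{v}(\block_{l})$ and that of $B$ already equals or extends $\block_l$, and then establish the same for $B$. The base case is $B = \Cert_{v}(\block_l)$ itself, which is immediate. The inductive step splits according to whether $B$ is certified in the same view $v$ or in a later view $v' > v$, and within each case the argument specializes to the two commit rules. Throughout I use that a correct commit rule bounds the faulty replicas below $\q + \qmin - 1 < \qmin$ (CR1) or below $\qmin$ (CR2), so in either case every certificate, needing $\qmin$ votes, contains at least one honest vote; I also use that an honest replica votes only along a single chain within a view and that a replica's locked block only increases in rank.

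Same-view case ($v'=v$, $k' \geq l$): For CR1, $\block_l$ collected $\q$ votes in view $v$; since the $\qmin$ votes forming $B$ and these $\q$ votes overlap in more than the number of faulty replicas (as $\q + \qmin - 1$ exceeds the fault bound), the intersection contains an honest replica that voted for $\block_l$ at height $l$ and for $\block_{k'}$ at height $k' \geq l$, both in view $v$; single-chain voting then forces $\block_{k'}$ to equal or extend $\block_l$. For CR2 I would invoke the synchrony argument of ``quorum honesty (a)'': the directly-committing honest replica $h$ forwarded $\block_{l+1}$ (hence $\block_l$ and $\Cert_{v}(\block_l)$) at time $\TimeAccept_{l,v}$, so every honest replica receives $\block_l$'s chain within $\Delta$; any honest replica that later voted off this chain would either have already observed the equivocation and quit, or would cause $h$ to observe the equivocation within $2\Delta$, contradicting $h$'s undisturbed-$2\Delta$ period. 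Hence no honest replica votes off $\block_l$'s chain in view $v$, and since $B$ carries an honest vote, $\block_{k'}$ equals or extends $\block_l$.

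Later-view case ($v' > v$): The first step is to show that when replicas leave view $v$, enough of them are locked on a block of rank at least $\Cert_{v}(\block_l)$: under CR1 the $\q$ replicas that voted for $\block_{l+1}$ lock $\block_l$, and under CR2 the ``quorum honesty (b)'' synchrony argument shows every honest replica locks a block of rank at least $\Cert_{v}(\block_l)$ before any view change out of $v$ can occur. Since $B$ is certified in $v'$, its chain descends from the first proposal of view $v'$, which the leader must justify by a status certificate $\StatusCert$ of $\qmin$ locked-block reports. By flexible quorum intersection with the lockers (CR1) or by counting honest replicas among $\qmin$ (CR2), $\StatusCert$ contains a report from a replica locked on a block of rank at least $\Cert_{v}(\block_l)$; as any block reported on entering $v'$ was certified in a view below $v'$, its rank is strictly below that of $B$, so the inductive hypothesis applies and that block equals or extends $\block_l$. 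The highest certified block in $\StatusCert$ therefore also has rank at least $\Cert_{v}(\block_l)$ and, being certified before $v'$, equals or extends $\block_l$ by the inductive hypothesis; the first proposal of $v'$ extends it, all honest votes in $v'$ follow this single chain, and since $B$ carries an honest vote, $\block_{k'}$ equals or extends $\block_l$.

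I expect the delicate part to be the later-view case, where the correctness of the unlocking/status mechanism must be tied to the induction: one must rule out that an adversary reports a ``higher,'' conflicting certificate inside $\StatusCert$, and it is precisely the inductive hypothesis over the intermediate views that forbids such a certificate from existing --- this is what forces the strong-induction-on-rank formulation rather than a single-view argument. For CR2 this is compounded by the timing bookkeeping needed to show that all honest replicas lock $\block_l$ before they can enter view $v+1$, which leans on $h$'s undisturbed-$2\Delta$ window together with the assumed correctness of $\Delta$; keeping the $\Delta$ and $2\Delta$ offsets consistent while splicing them into the across-view induction is the main source of care.
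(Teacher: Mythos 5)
Your proposal is correct and follows essentially the same route as the paper's proof: an induction over views/ranks whose base case (same view $v$) is handled by flexible quorum intersection for CR1 and the undisturbed-$2\Delta$ forwarding argument for CR2, and whose inductive step for later views goes through the locked blocks surviving into the status certificate $\StatusCert$ and forcing all honest votes onto $\block_l$'s chain. The only cosmetic differences are that you phrase it as a single strong induction on certificate rank rather than two separate inductions on $v'$ (one per commit rule), and you are slightly more explicit about why the inductive hypothesis applies to the blocks reported in $\StatusCert$; neither changes the substance.
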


\begin{proof}
To elaborate on the lemma, 
a certified block $\Cert_{v'}(\block'_{l'})$ ranks no lower than $\Cert_{v}(\block_{l})$
if either (i) $v'=v$ and $l' \geq l$, or (ii) $v'>v$.
We need to show that if $\block_l$ is directly committed, 
then any certified block that ranks no lower either equals or extends $\block_l$.
We consider the two commit rules separately. 
For both commit rules, we will use induction on $v'$ to prove the lemma.

\bigskip
For $\textbf{CR1}$ with parameter $\q$ to be correct, 
flexible quorum intersection needs to hold, i.e.,
the fraction of faulty replicas must be less than $\q+\qmin-1$.
$\block_l$ being directly committed under $\textbf{CR1}$ with parameter $\q$ implies that 
there are $\q$ votes in view $v$ for $\block_l$ and $\block_{l+1}$ where $\block_{l+1}$ extends $\block_l$. 

For the base case, a block $\block'_{l'}$ with $l'\geq l$ that does not extend $\block_l$ cannot get certified in view $v$,
because that would require $\q+\qmin-1$ replicas to vote for two equivocating blocks in view $v$.

Next, we show the inductive step.
Note that $\q$ replicas voted for $\block_{l+1}$ in view $v$,
which contains $\Cert_{v}(\block_{l})$.
Thus, they lock $\block_l$ or a block extending $\block_l$ by the end of view $v$.
Due to the inductive hypothesis,
any certified block that ranks equally or higher from view $v$ up to view $v'$ 
either equals or extends $\block_l$. 
Thus, by the end of view $v'$,
those $\q$ replicas still lock $\block_l$ or a block extending $\block_l$.
Since the total fraction of faults is less than $\q+\qmin-1$,
the status $\StatusCert$ shown by the leader of view $v'+1$ 
must include a certificate for $\block_l$ or a block extending it;
moreover, any certificate that ranks equal to or higher than $\Cert_{v}(\block_{l})$
is for a block that equals or extends $\block_l$. 
Thus, only a block that equals or extends $\block_l$ can gather votes from those $\q$ replicas in view $v'+1$
and only a block that equals or extends $\block_l$ can get certified in view $v'+1$.

\bigskip
For $\textbf{CR2}$ with synchrony bound $\Delta$ to be correct, 
$\Delta$ must be an upper bound on worst case message delay 
and the fraction of faulty replicas is less than $\qmin$. 
$\block_l$ being directly committed under $\textbf{CR2}$ with $\Delta$-synchrony implies that
at least one honest replica voted for $\block_{l+1}$ extending $\block_l$ in view $v$,
and did not hear an equivocating block or view change within $2\Delta$ time after that.
Call this replica $h$.
Suppose $h$ voted for $\block_{l+1}$ extending $\block_l$ in view $v$ at time $t$,
and did not hear an equivocating block or view change by time $t+2\Delta$. 

We first show the base case: a block $\block'_{l'}$ with $l'\geq l$ certified in view $v$ must equal or extend $\block_l$.
Observe that if $\block'_{l'}$ with $l'\geq l$ does not equal or extend $\block_l$,
then it equivocates $\block_l$.
No honest replica voted for $\block'_{l'}$ before time $t+\Delta$,
because otherwise $h$ would have received the vote for $\block'_{l'}$ by time $t+2\Delta$,
No honest replica would vote for $\block'_{l'}$ after time $t+\Delta$ either,
because by then they would have received (from $h$) and voted for $\block_l$.
Thus, $\block'_{l'}$ cannot get certified in view $v$.

We then show the inductive step.
Because $h$ did not hear view change by time $t+2\Delta$,
all honest replicas are still in view $v$ by time $t+\Delta$,
which means they all receive $\block_{l+1}$ from $h$ by the end of view $v$.
Thus, they lock $\block_l$ or a block extending $\block_l$ by the end of view $v$.
Due to the inductive hypothesis,
any certified block that ranks equally or higher from view $v$ up to view $v'$ 
either equals or extends $\block_l$. 
Thus, by the end of view $v'$,
all honest replicas still lock $\block_l$ or a block extending $\block_l$.
Since the total fraction of faults is less than $\qmin$,
the status $\StatusCert$ shown by the leader of view $v'+1$ 
must include a certificate for $\block_l$ or a block extending it;
moreover, any certificate that ranks equal to or higher than $\Cert_{v}(\block_{l})$
is for a block that equals or extends $\block_l$. 
Thus, only a block that equals or extends $\block_l$ can gather honest votes in view $v'+1$
and only a block that equals or extends $\block_l$ can get certified in view $v'+1$.
\end{proof}

\begin{theorem}[Safety]
Two clients with correct commit rules commit the same block $\block_k$ for each height $k$.
\label{thm:safety}
\end{theorem}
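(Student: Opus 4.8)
The plan is to reduce the statement to the already-established Lemma~\ref{lemma:unique-cert} via the notions of direct and indirect commit. Suppose two clients, each using a (possibly different) correct commit rule, commit blocks $\block_k$ and $\block'_k$ at the same height $k$; the goal is to show $\block_k = \block'_k$. First I would observe that every commit is ultimately backed by a \emph{direct} commit: if a client commits its height-$k$ block only indirectly, then by definition some block at height $l \geq k$ extending that block is committed \emph{directly} by the same client, under \textbf{CR1} or \textbf{CR2}. Hence it suffices to analyze the two direct commits underlying the clients' decisions, say a direct commit of $\block_l$ (equal to or extending $\block_k$) in view $v$ and a direct commit of $\block'_{l'}$ (equal to or extending $\block'_k$) in view $v'$.

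Next I would record that a directly committed block is in particular certified: under \textbf{CR1} it receives $\q \geq \qmin$ votes and thus carries a certificate $\Cert_v(\block_l)$, while under \textbf{CR2} an honest replica voted for its successor, which again requires $\Cert_v(\block_l)$ to exist. Since certified blocks are ranked first by view and then by height, I would break ties by rank and assume without loss of generality that $\Cert_{v'}(\block'_{l'})$ ranks no lower than $\Cert_v(\block_l)$.

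The key step is to apply Lemma~\ref{lemma:unique-cert} to the direct commit of $\block_l$ in view $v$, using the correct commit rule of the first client. Because $\block'_{l'}$ is certified and ranks no lower than $\Cert_v(\block_l)$, the lemma forces $\block'_{l'}$ to equal or extend $\block_l$. Combining this with the facts that $\block_l$ equals or extends $\block_k$ and that $\block'_{l'}$ equals or extends $\block'_k$, both $\block_k$ and $\block'_k$ are ancestors (or equal to) $\block'_{l'}$ at height $k$. Since a block determines a unique ancestor at every height along its chain, $\block_k = \block'_k$, which is exactly the claim.

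I expect the main obstacle to be bookkeeping rather than a genuinely new argument: once both committed blocks are viewed as certified blocks comparable by rank, Lemma~\ref{lemma:unique-cert} does all the heavy lifting. The one subtlety worth checking carefully is that the reduction and the rank-based \emph{without loss of generality} apply uniformly across mixtures of rules — two \textbf{CR1} clients with different $\q$, two \textbf{CR2} clients with different $\Delta$, or one of each. The lemma's hypothesis only requires the committing client's own rule to be correct, so applying it to whichever direct commit is lower-ranked is always legitimate, and the argument never needs the two rules to coincide.
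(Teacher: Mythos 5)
Your proposal is correct and follows essentially the same route as the paper's proof: reduce each client's commit to the direct commit backing it, order the two directly committed (and hence certified) blocks by rank, and apply Lemma~\ref{lemma:unique-cert} to conclude that the higher-ranked one equals or extends the lower-ranked one, forcing equality at height $k$. The extra bookkeeping you supply --- that a direct commit under either rule yields a certificate, and that a block determines a unique ancestor at each height --- is left implicit in the paper but is exactly right.
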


\begin{proof}
Suppose for contradiction that two distinct blocks 
$\block_k$ and $\block'_k$ are committed at height $k$.
Suppose $\block_k$ is committed as a result of $\block_{l}$ being directly committed in view $v$
and $\block'_k$ is committed as a result of $\block'_{l'}$ being directly committed in view $v'$.
This implies $\block_l$ is or extends $\block_k$;
similarly, $\block'_{l'}$ is or extends $\block'_k$.
Without loss of generality, assume $v \leq v'$.
If $v=v'$, further assume $l \leq l'$ without loss of generality.
By Lemma~\ref{lemma:unique-cert},
the certified block $\Cert_{v'}(\block'_{l'})$ must equal or extend $\block_l$.
Thus, $\block'_k=\block_k$.
\end{proof}

\begin{theorem}[Liveness]
If all clients have correct commit rules, 
they all keep committing new blocks. 
\label{thm:liveness}
\end{theorem}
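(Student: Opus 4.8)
The plan is to leverage the already-established safety guarantee to fix the behaviour of \psing replicas, and then to run the standard ``eventually a good leader drives a long-enough view'' argument, taking care to serve both client types at once. Since Theorem~\ref{thm:safety} secures safety under correct commit rules, the fault model permits me to treat \psing replicas as honest for the rest of the argument: their stated objective is to attack safety, and once safety cannot be broken they cooperate toward liveness. I would therefore call the union of honest and \psing replicas the \emph{good} replicas. By flexible quorum availability the Byzantine fraction is at most $1-\q$, so the good fraction is at least $\q \geq \qmin$. Two consequences I would record immediately: (i) the good replicas alone supply the $\qmin$ votes needed to certify a block and the $\q$ votes needed for a \textbf{CR1} commit; and (ii) since $\qmin > 1/2 \geq 1-\qmin \geq$ Byzantine fraction, the Byzantine replicas cannot by themselves assemble a $\qmin$-sized blame certificate, and hence cannot unilaterally force a view change.

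Next I would argue that progress through views cannot stall indefinitely. If the current leader is Byzantine and withholds a proposal or equivocates, every good replica blames it (on timeout, or upon observing the equivocation), and the $\geq \qmin$ good blames form a blame certificate that advances all good replicas to the next view. Because leaders are chosen round-robin and the Byzantine fraction is below $1$, a good leader is reached within $n$ views. Using increasing timeouts together with the eventual-synchrony guarantee of the partial-synchrony assumption (after some stabilization point, messages between good replicas arrive within the bound), I would show that eventually a good leader presides over a view whose timeout is long enough that, once the network has stabilized, no good replica times out and the view persists.

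I would then analyze such a stable good-leader view. The good leader extends the highest certified block, correctly recovered from the $\qmin$ status messages collected at view entry, and proposes a sequence of new blocks at increasing heights; each good replica votes, so every proposed block collects $\geq \q \geq \qmin$ votes and is certified. Since a good leader never equivocates and, by the availability bound above, no view change can be triggered during this view, both commit rules become satisfiable on the newly proposed blocks: a \textbf{CR1} client observes two consecutive blocks $\block_l$ and $\block_{l+1}$ each with $\geq \q$ votes in the same view, while a \textbf{CR2} client observes $\qmin$ good replicas each reporting an undisturbed-$2\Delta$ period once the leader keeps the view alive for more than $2\Delta$ past a lock. As the leader keeps chaining new blocks at ever-increasing heights, every client with a correct commit rule keeps committing new blocks.

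The main obstacle I anticipate is the timing argument in the second step: formally combining the increasing-timeout schedule, the eventual-synchrony point, and the $2\Delta$ window required by \textbf{CR2}, so as to guarantee a single good-leader view that is simultaneously long enough to certify two consecutive blocks (for \textbf{CR1}) and to furnish an undisturbed-$2\Delta$ period (for \textbf{CR2}). Everything else reduces to the quorum counting already recorded in the first paragraph, so I would keep that bookkeeping explicit and concentrate the care on the view that finally delivers progress.
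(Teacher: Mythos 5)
Your proposal is correct and follows essentially the same route as the paper's proof: invoke Theorem~\ref{thm:safety} to justify treating \psing replicas as honest, note that a correct commit rule bounds the Byzantine fraction by $1-\qmin$ so blame certificates evict bad leaders and quorums remain available, and conclude via round-robin rotation plus increasing timeouts that a good leader eventually holds a view long enough for both \textbf{CR1} and \textbf{CR2} clients to commit. You actually spell out the quorum bookkeeping and the timing/stabilization argument in more detail than the paper, which compresses that step into a single sentence.
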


\begin{proof}
By the definition of \psing faults,
if they cannot violate safety, they will preserve liveness.
Theorem~\ref{thm:safety} shows that if all clients have correct commit rules,
then safety is guaranteed \emph{even if \psing replicas behave arbitrarily}.
Thus, once we proved safety, we can treat \psing replicas 
as honest when proving liveness. 

Observe that a correct commit rule tolerates at most $1-\qmin$ Byzantine faults.
If a Byzantine leader prevents liveness, 
there will be $\qmin$ blame messages against it,
and a view change will ensue to replace the leader. 
Eventually, a non-Byzantine (honest or \psing) replica becomes the leader 
and drives consensus in new heights.
If replicas use increasing timeouts,
eventually, all non-Byzantine replicas stay in the same view for sufficiently long.
When both conditions occur, 
if a client's commit rule is correct (either \textbf{CR1} and \textbf{CR2}),
due to quorum availability,
it will receive enough votes in the same view to commit. 
\end{proof}


\subsection{Efficiency}

\paragraph{Latency.}
Clients with a synchrony assumption incur a latency of
$2\Delta$ plus a few network speed rounds. In terms of the
maximum network delay $\Delta$, this matches the 
state-of-the-art synchronous
protocols~\cite{abraham2019sync}. The distinction though is that
$\Delta$ now depends on the client assumption and 
hence different clients may commit with different latencies
Clients with partial-synchrony assumptions incur a latency of two
rounds of voting; this matches PBFT~\cite{castro1999practical}.

\paragraph{Communication.}
Every vote and new-view messages are broadcast to all replicas,
incurring $O(n^2)$ communication messages. This is the same
complexity of PBFT~\cite{castro1999practical} and Sync
HotStuff~\cite{abraham2019sync}.


\section{Discussion}
\label{sec:discussion}

As we have seen, three parameters 
$\qmin$, $\q$, and $\Delta$ determine the protocol. 
$\qmin$ is the only parameter for the replicas
and is picked by the service administrator.
The choice of $\qmin$
determines a set of client assumptions that can be supported. 
$\q$ and $\Delta$ are chosen by clients to commit blocks.
In this section, we first discuss the client assumptions supported by a
given $\qmin$ and then discuss the trade-offs 
between different choices of $\qmin$.

\subsection{Client Assumptions Supported by $\qmin$}
\label{sec:client-beliefs-given}

\begin{figure}[tbp]
  \centering
    \includegraphics[width=0.45\textwidth]{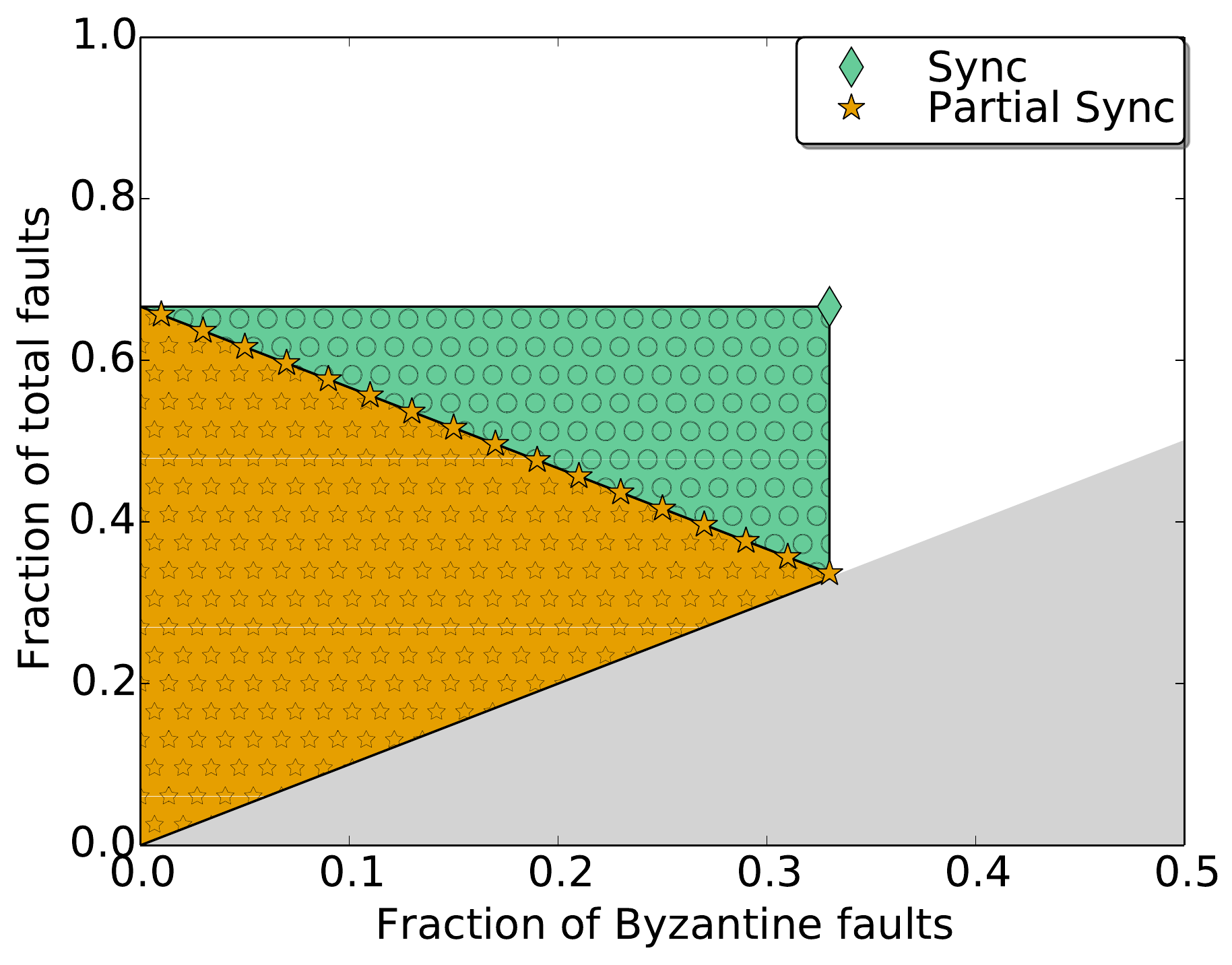}
    \caption{\textbf{Clients supported for $\qmin = 2/3$.}}
    \label{fig:clients-supported}
\end{figure}

Figure~\ref{fig:clients-supported} represents
the clients supported at $\qmin = 2/3$. 
The x-axis represents Byzantine faults
and the y-axis represents total faults (Byzantine plus \psing). 
Each point on this graph 
represents a client fault assumption as a pair: (Byzantine faults, total faults).
The shaded gray area indicates an ``invalid area'' since we
cannot have fewer total faults than Byzantine faults.
A missing dimension
in this figure is the choice of $\Delta$. 
Thus, the synchrony guarantee shown in this figure is for clients
that choose a correct synchrony bound. 

Clients with partial-synchrony assumptions
can get fault tolerance on (or below) the starred orange line. 
The right most point on the line is $(1/3, 1/3)$, i.e., we
tolerate less than a third of Byzantine replicas and no additional
\psing replicas.  
This is the setting of existing partially synchronous consensus
 protocols~\cite{DLS88,castro1999practical,yin2018hotstuff}. 
\name generalizes these protocols by giving clients the option of
moving up-left along the line, 
i.e., tolerating fewer Byzantine and more total faults.
By choosing $\q>\qmin$, a client tolerates 
$< \q+\qmin-1$ total faults for safety 
and $\leq 1-\q$ Byzantine faults for liveness. 
In other words, as a client moves left,  
for every additional vote it requires,
it tolerates one fewer Byzantine fault and gains overall
one higher total number of faults (i.e., two more \psing faults).
The left most point on this line $(0, 2/3)$ tolerating no 
Byzantine replicas and the highest fraction of \psing replicas. 

Moreover, for clients who believe in synchrony,
if their $\Delta$ assumption is correct,
they enjoy 1/3 Byzantine tolerance and 2/3 total tolerance
represented by the green diamond.
This is because synchronous commit rules are not parameterized by
the number of votes received. 

\paragraph{How do clients pick their commit rules?}
In Figure~\ref{fig:clients-supported}, the shaded starred orange
portion of the plot represent fault tolerance provided 
by the partially synchronous commit rule (CR1). 
Specifically, setting $\q$ to the total fault 
fraction yields the necessary commit rule. On the other hand, if
a client's required fault tolerance lies in the circled green portion
of the plot, then the synchronous commit rule (CR2) with an
appropriate $\Delta$ picked by the client yields the necessary
commit rule. Finally, if a client's target fault tolerance corresponds to the
white region of the plot, then 
it is not achievable with this $\qmin$.

\paragraph{Clients with incorrect assumptions and recovery.}
If a client has incorrect assumption with respect to the fault
threshold or synchrony parameter $\Delta$, then it can lose
safety or liveness. If a client believing in synchrony
picks too small a $\Delta$ and commits a value $b$, 
it is possible that a conflicting
value $b'$ may also be certified. Replicas may
choose to extend the branch containing $b'$, effectively
reverting $b$ and causing a safety violation. Whenever a client
detects such a safety violation, it may need to revert
some of its commits and increase $\Delta$ to recover.

For a client with partial-synchrony assumption, if it loses safety, 
it can update its fault model to move left along
the orange starred line, i.e., tolerate higher total faults but fewer
Byzantine. On the other hand, if it observes no progress as its
threshold $\q$ is not met, then it moves towards the
right. However, if the true fault model is in the circled green
region in Figure~\ref{fig:clients-supported}, then the client
cannot find a partially synchronous commit rule that is both safe and live
and eventually has to switch to using a synchronous commit rule.

Recall that the goal of \psing replicas is to attack 
safety. Thus, clients with
incorrect assumptions may be exploited by
\psing replicas for their own gain (e.g., by double-spending). 
When a client updates to a correct assumption 
and recovers from unsafe commits, their subsequent commits
would be safe and final. This is remotely analogous to Bitcoin 
-- if a client commits to a transaction when it is a few blocks deep and a
powerful adversary succeeds in creating an alternative longer fork, the
commit is reverted. 

\subsection{Comparing Different $\qmin$ Choices}
\label{sec:comp-diff-qmins}
\begin{figure}[tbp]
  \centering
    \includegraphics[width=0.45\textwidth]{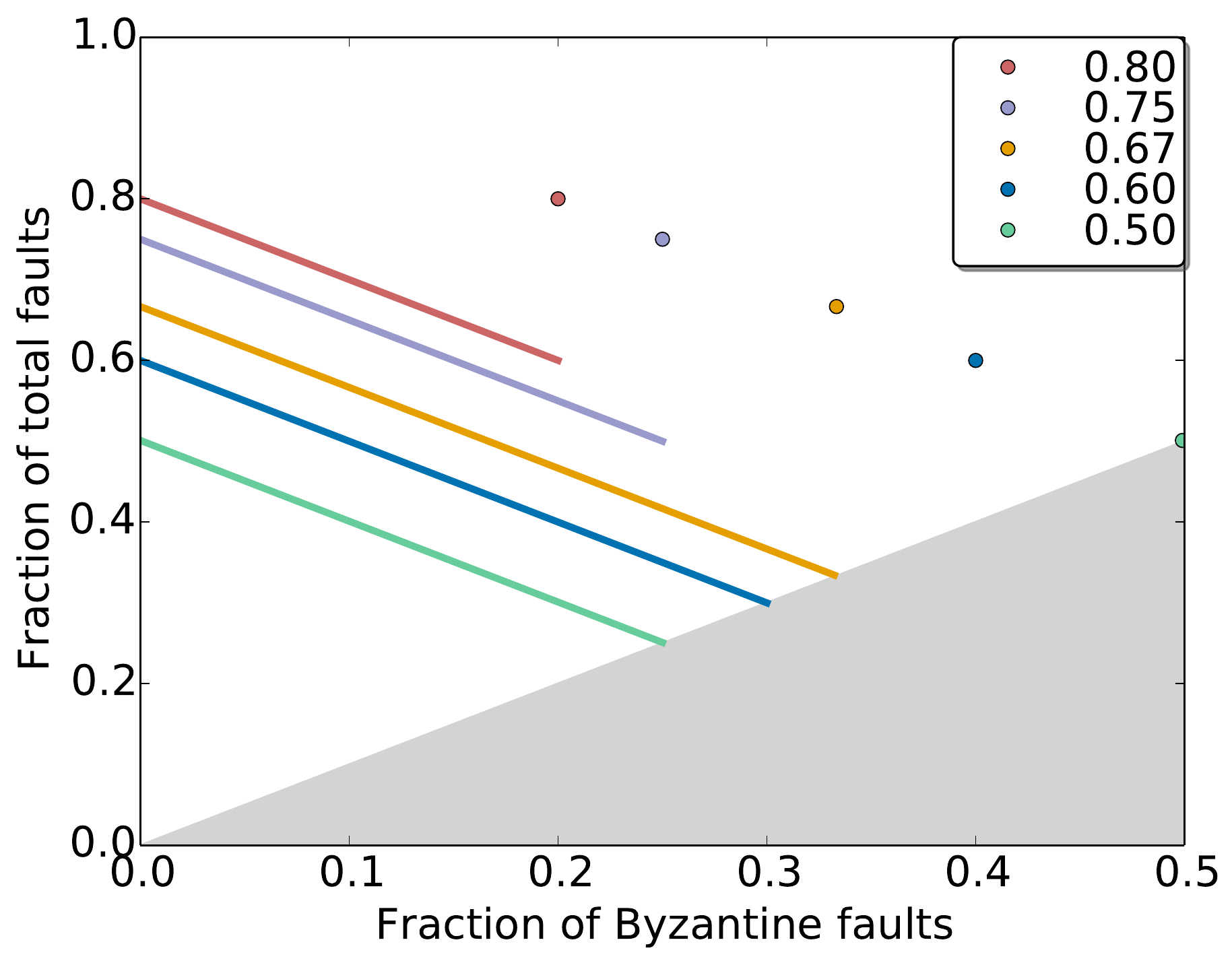}
    \caption{Clients supported by \name at different
      $\qmin$'s.  The legend represents the different $\qmin$ values.}
    \label{fig:varying-qmin}
\end{figure}
We now look at the service administrator's choice at picking
$\qmin$. 
In general, the service administrator's goal is to tolerate a large number of Byzantine
and \psing faults, i.e., move towards top and/or right of the figure.
Figure~\ref{fig:varying-qmin} shows the trade-offs
in terms of clients supported by different $\qmin$ values in \name.

First, it can be observed that for clients with partial-synchrony assumptions, 
$\qmin \geq 2/3$ dominates $\qmin < 2/3$. 
Observe that the fraction of Byzantine
replicas $(B)$ are bounded by $B < \q+\qmin-1$ and $B \leq
1-\q$, so $B \leq \qmin/2$. 
Thus, as $\qmin$ decreases, Byzantine fault tolerance decreases.
Moreover, since the total fault tolerance is 
$\q + \qmin - 1$, a lower $\qmin$ also tolerates a
smaller fraction of total faults for a fixed $\q$.

For $\qmin \geq 2/3$ or for clients believing in synchrony, 
no value of $\qmin$ is Pareto optimal. 
For clients with partial-synchrony assumptions, as $\qmin$ increases, 
the total fault tolerance for safety increases.
But since $\q \geq \qmin$, we have $B \leq 1 - \qmin$, and
hence the Byzantine tolerance for liveness decreases.
For clients believing in synchrony, the total fault tolerance
for safety is $< \qmin$ and the Byzantine fault tolerance for
liveness is $\geq 1-\qmin$. 
In both cases, the choice of $\qmin$ represents a safety-liveness
trade-off.

\subsection{Separating \Pleasing Resilience from Diversity}

So far, we presented the \name techniques and protocols 
to simultaneously support diverse client support and stronger \psing fault tolerance.
Indeed, we believe both properties are desirable and they strengthen each other.
But we remark that these two properties can be provided separately.

It is relatively straightforward to provide stronger fault tolerance in the \psing model in a classic uniform setting. 
For example, under partial-synchrony, one can simply use a larger quorum in PBFT (without the $\qmin$/$q$ replica/client quorum separation).
But we note that a higher total (\psing plus Byzantine) tolerance comes at the price of a lower Byzantine tolerance.
In a uniform setting, this means \emph{all} clients have to sacrifice some Byzantine tolerance.
In the diverse setting, \name gives clients the freedom to choose the fault assumption they believe in,
and a client can choose the classic Byzantine fault model.

On the flip side, if one hopes to support diverse clients in the classic Byzantine fault (no \psing faults),
the ``dimension of diversity'' reduces.
One example is the network speed replica protocol in Section~\ref{sec:overv-synchr-flex},
which supports clients that believe in different synchrony bounds.
That protocol can be further extended to support clients with a (uniform) partial-synchrony assumption.
Clients with partial-synchrony assumption are uniform since
we have not identified any type of ``diversity'' outside \psing faults for them.


\section{Related Work}
\label{sec:related-work}

\begin{figure}[htbp]
  \centering
  \begin{subfigure}{0.45\textwidth}
    \includegraphics[width=1\linewidth]{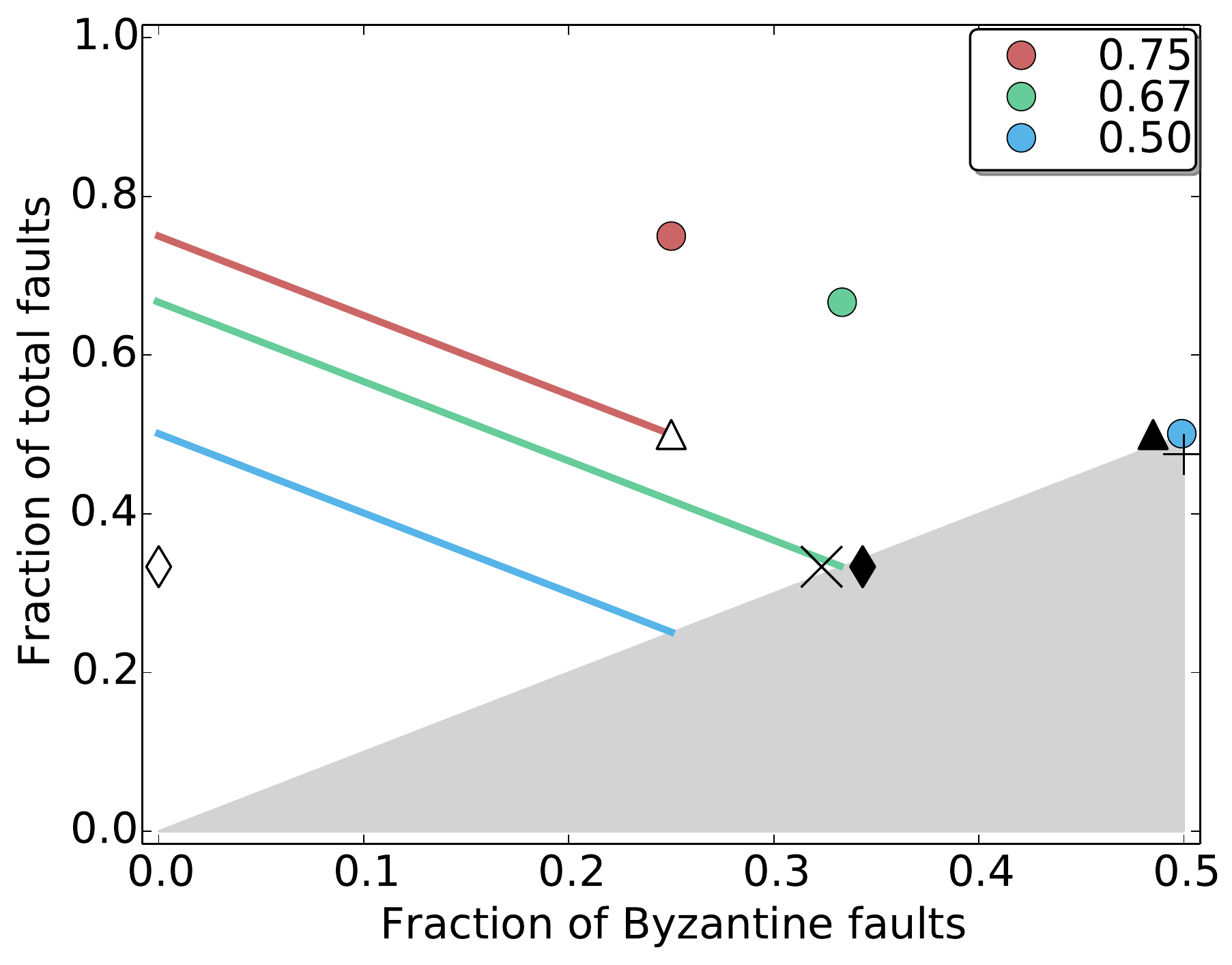}
  \end{subfigure}
  
  \begin{subfigure}{0.45\textwidth}
      \begin{tabular}{c l}
        $+$ & Partially Synchronous protocols~\cite{castro1999practical,Yin03,Martin06,Zyzzyva07,yin2018hotstuff,buchman2016tendermint}\\
        $\times$ & Synchronous Protocols~\cite{pass2018thunderella,hanke2018dfinity,abraham2019sync,abraham2018synchronous}\\
        $\blacktriangle$ & Thunderella, Sync HotStuff ($\vartriangle$: optimistic)~\cite{pass2018thunderella,abraham2019sync}\\
        $\blacklozenge$ & Zyzzyva, SBFT ($\lozenge$: optimistic)~\cite{Zyzzyva07}
      \end{tabular}
  \end{subfigure}
  \caption{\textbf{Comparing \name to existing consensus
      protocols.} The legend represent different $\qmin$ values.}
  \label{fig:compare}
\end{figure}

Most BFT protocols are designed with a uniform assumption about the
system. The literature on BFT consensus is vast and is largely beyond scope for
review here; we refer the reader to the standard
textbooks in distributed computing~\cite{lynch1996distributed,attiya2004distributed}.

\paragraph{Resilience.} Figure~\ref{fig:compare} compares resilience in \name
with some existing consensus protocols.  The x axis represents a Byzantine
resilience threshold, the y axis the total resilience against
corruption under the \psing fault mode.  The three
different colors (red, green, blue) represent three possible instantiations of
\name at different $\qmin$'s. 

Each point in the figure represents an abstract ``client'' belief. For the partial
synchrony model, client beliefs form lines, and for synchronous settings,
clients beliefs are individual circles.
The locus of points on a given color represents all client
assumptions supported for a corresponding $\qmin$, representing the diversity of
clients supported. 
The figure depicts state-of-art resilience combinations by existing consensus solutions via 
uncolored shapes, $+, \times, \vartriangle, \blacktriangle, \lozenge, \blacklozenge$.
Partially synchronous protocols~\cite{castro1999practical, yin2018hotstuff,buchman2016tendermint} that tolerate one-third Byzantine faults can all be represented  by the `+' symbol at $(1/3, 1/3)$. 
Similarly, synchronous
protocols~\cite{hanke2018dfinity,abraham2018dfinity,abraham2018synchronous}
that tolerate one-half Byzantine faults are represented by the
`$\times$' symbol at $(1/2, 1/2)$.
It is worth noting that some of these works 
employ two commit rules that differ in number of votes 
or synchrony~\cite{Martin06,Zyzzyva07,pass2018thunderella,abraham2019sync}.
For instance, Thunderella and Sync HotStuff optimistically commit 
in an asynchronous fashion based on quorums of size $\geq 3/4$, 
as represented by a hollow triangle at $(1/4, 1/2)$.
Similarly, FaB~\cite{Martin06}, Zyzzyva~\cite{Zyzzyva07} and SBFT~\cite{gueta2018sbft}
optimistically commit when they receive all votes but wait for two rounds of votes otherwise.
These are represented by two points in the figure. 
Despite the two commit rules, 
these protocols do not have client diversity, 
all parties involved (replicas and clients) make the same assumptions
and reach the same commit decisions.

\paragraph{Diverse client beliefs.}
A simple notion of client diversity exists in Bitcoin's probabilistic commit rule.
One client may consider a transaction committed after six confirmations while another may require only one confirmation.
Generally, the notion of client diversity has been discussed informally at public blockchain forums. 

Another example of diversity is considered in the XFT protocol~\cite{XFT}.
The protocol supports two types of clients: clients that assume crash faults under partial synchrony,
or clients that assume Byzantine faults but believe in synchrony.
Yet another notion of diversity is considered 
by the federated Byzantine consensus model and the Stellar
protocol~\cite{mazieres2015stellar}.
The Stellar protocol allows nodes to pick their own quorums.
Our \name approach instead considers diverse clients 
in terms of \psing adversaries and synchrony.
The model and techniques in~\cite{mazieres2015stellar} and our paper
are largely orthogonal and complementary.

\paragraph{Flexible Paxos.} Flexible Paxos by Howard et
al.~\cite{DBLP:conf/opodis/HowardMS16} observes that 
Paxos may use non-intersecting quorums within a view 
but an intersection is required across views.
Our Flexible Quorum Intersection (b) can be viewed as
its counterpart in the Byzantine and \psing setting.
In addition, \name applies the flexible quorum idea
to support diverse clients with different fault model and timing assumptions.

\paragraph{Mixed fault model.}
Fault models that mix Byzantine and crash faults have been considered in various
works, e.g., FaB~\cite{Martin06} and SBFT~\cite{abraham2019sync}. The \psing
faults are in a sense the opposite of crash faults, mixing Byzantine with
``anti-crashes''. 
Our \psing adversary bears similarity to a rational adversary in the BAR
model~\cite{aiyer2005bar}, with several important differences.
BAR assumes no collusion exists among rational replicas themselves and between
rational and Byzantine replicas, whereas \psing replicas have no such
constraint.
BAR solutions are designed to expose cheating behavior and thus deter rational
replicas from cheating. The \name approach does not rely on deterrence for good
behavior, and breaks beyond the $1/3$ ($1/2$) corruption tolerance threshold in
asynchronous (synchronous) systems. 
Last, BAR solutions address only the partial synchrony settings. 
At the same time, BAR provides a game theoretic proof of rationality. 
More generally, game theoretical modeling and analysis with collusion have been performed to other problems 
such as secret sharing and multiparty computation~\cite{abraham2006distributed,lysyanskaya2006rationality,gordon2006rational,kol2008cryptography}. 
Analyzing incentives for the \psing model remains an open challenge.


\section{Conclusion and Future Work}
\label{sec:conclusion}

We present Flexible BFT, a protocol that supports diverse clients with different assumptions to use the same ledger. Flexible BFT allows the clients to tolerate combined (Byzantine plus \pleasing) faults exceeding 1/2 and 1/3 for synchrony and partial synchrony respectively. At a technical level, under synchrony, we show a synchronous protocol where the replicas execute a network speed protocol and only the commit rule uses the synchrony assumption. For partial synchrony, we introduce the notion of Flexible Byzantine Quorums by deconstructing existing BFT protocols to understand the role played by the different quorums. We combine the two to form Flexible BFT which obtains the best of both worlds.

Our liveness proof in Section~\ref{sec:proof}
employs a strong assumption that all clients have correct commit rules.
This is because our \pleasing fault model did not specify 
what these replicas would do if they can violate safety for some clients.
In particular, they may stop helping liveness. 
However, we believe this will not be a concern 
once we move to a more realistic rational model.
In that case, the best strategy for \pleasing replicas
is to attack the safety of clients with unsafe commit rules 
while preserving liveness for clients with correct commit rules. 
Such an analysis in the rational fault model
remains interesting future work.
Our protocol also assumes that all replicas have clocks that advance at the same rate. It is interesting to explore whether our protocol can be modified to work with clock drifts.


\section*{Acknowledgement}
We thank Ittai Abraham and Ben Maurer for many useful discussions
on Flexible BFT. We thank Marcos Aguilera for many insightful
comments on an earlier draft of this work
\bibliographystyle{plain}
\bibliography{refs}

\begin{thebibliography}{10}

\bibitem{abraham2018synchronous}
Ittai Abraham, Srinivas Devadas, Danny Dolev, Kartik Nayak, and Ling Ren.
\newblock Synchronous byzantine agreement with expected $o(1)$ rounds, expected
  $o(n^2)$ communication, and optimal resilience.
\newblock In {\em Financial Cryptography and Data Security (FC)}, 2019.

\bibitem{abraham2006distributed}
Ittai Abraham, Danny Dolev, Rica Gonen, and Joe Halpern.
\newblock Distributed computing meets game theory: robust mechanisms for
  rational secret sharing and multiparty computation.
\newblock In {\em Proceedings of the twenty-fifth annual ACM symposium on
  Principles of distributed computing}, pages 53--62. ACM, 2006.

\bibitem{abraham2018dfinity}
Ittai Abraham, Dahlia Malkhi, Kartik Nayak, and Ling Ren.
\newblock Dfinity consensus, explored.
\newblock Cryptology ePrint Archive, Report 2018/1153, 2018.

\bibitem{abraham2019sync}
Ittai Abraham, Dahlia Malkhi, Kartik Nayak, Ling Ren, and Maofan Yin.
\newblock Sync hotstuff: Simple and practical state machine replication.
\newblock Cryptology ePrint Archive, Report 2019/270, 2019.
\newblock \url{https://eprint.iacr.org/2019/270}.

\bibitem{aiyer2005bar}
Amitanand~S Aiyer, Lorenzo Alvisi, Allen Clement, Mike Dahlin, Jean-Philippe
  Martin, and Carl Porth.
\newblock Bar fault tolerance for cooperative services.
\newblock In {\em ACM SIGOPS operating systems review}, volume~39, pages
  45--58. ACM, 2005.

\bibitem{attiya2004distributed}
Hagit Attiya and Jennifer Welch.
\newblock {\em Distributed computing: fundamentals, simulations, and advanced
  topics}, volume~19.
\newblock John Wiley \& Sons, 2004.

\bibitem{buchman2016tendermint}
Ethan Buchman.
\newblock {\em Tendermint: Byzantine fault tolerance in the age of
  blockchains}.
\newblock PhD thesis, 2016.

\bibitem{DBLP:journals/corr/abs-1710-09437}
Vitalik Buterin and Virgil Griffith.
\newblock Casper the friendly finality gadget.
\newblock {\em CoRR}, abs/1710.09437, 2017.

\bibitem{CKS05}
Christian Cachin, Klaus Kursawe, and Victor Shoup.
\newblock Random oracles in {Constantinople}: Practical asynchronous byzantine
  agreement using cryptography.
\newblock {\em Journal of Cryptology}, 18(3):219--246, 2005.

\bibitem{castro1999practical}
Miguel Castro and Barbara Liskov.
\newblock Practical byzantine fault tolerance.
\newblock In {\em OSDI}, volume~99, pages 173--186, 1999.

\bibitem{DS83}
Danny Dolev and H.~Raymond Strong.
\newblock Authenticated algorithms for byzantine agreement.
\newblock {\em SIAM Journal on Computing}, 12(4):656--666, 1983.

\bibitem{DLS88}
Cynthia Dwork, Nancy Lynch, and Larry Stockmeyer.
\newblock Consensus in the presence of partial synchrony.
\newblock {\em Journal of the ACM}, 35(2):288--323, 1988.

\bibitem{gordon2006rational}
S~Dov Gordon and Jonathan Katz.
\newblock Rational secret sharing, revisited.
\newblock In {\em International Conference on Security and Cryptography for
  Networks}, pages 229--241. Springer, 2006.

\bibitem{gueta2018sbft}
Guy~Golan Gueta, Ittai Abraham, Shelly Grossman, Dahlia Malkhi, Benny Pinkas,
  Michael~K Reiter, Dragos-Adrian Seredinschi, Orr Tamir, and Alin Tomescu.
\newblock Sbft: a scalable decentralized trust infrastructure for blockchains.
\newblock In {\em {DSN}}, 2019.

\bibitem{hanke2018dfinity}
Timo Hanke, Mahnush Movahedi, and Dominic Williams.
\newblock Dfinity technology overview series, consensus system.
\newblock {\em arXiv preprint arXiv:1805.04548}, 2018.

\bibitem{DBLP:conf/opodis/HowardMS16}
Heidi Howard, Dahlia Malkhi, and Alexander Spiegelman.
\newblock Flexible paxos: Quorum intersection revisited.
\newblock In {\em {OPODIS}}, volume~70 of {\em LIPIcs}, pages 25:1--25:14.
  Schloss Dagstuhl - Leibniz-Zentrum fuer Informatik, 2016.

\bibitem{katz2009expected}
Jonathan Katz and Chiu-Yuen Koo.
\newblock On expected constant-round protocols for byzantine agreement.
\newblock {\em Journal of Computer and System Sciences}, 75(2):91--112, 2009.

\bibitem{kol2008cryptography}
Gillat Kol and Moni Naor.
\newblock Cryptography and game theory: Designing protocols for exchanging
  information.
\newblock In {\em Theory of Cryptography Conference}, pages 320--339. Springer,
  2008.

\bibitem{Zyzzyva07}
Ramakrishna Kotla, Lorenzo Alvisi, Mike Dahlin, Allen Clement, and Edmund Wong.
\newblock Zyzzyva: speculative byzantine fault tolerance.
\newblock In {\em ACM SIGOPS Operating Systems Review}, volume~41, pages
  45--58. ACM, 2007.

\bibitem{lamport2006fast}
Leslie Lamport.
\newblock Fast paxos.
\newblock {\em Distributed Computing}, 19(2):79--103, 2006.

\bibitem{LSP82}
Leslie Lamport, Robert Shostak, and Marshall Pease.
\newblock The byzantine generals problem.
\newblock {\em ACM Transactions on Programming Languages and Systems},
  4(3):382--401, 1982.

\bibitem{XFT}
Shengyun Liu, Christian Cachin, Vivien Qu{\'e}ma, and Marko Vukolic.
\newblock {XFT}: practical fault tolerance beyond crashes.
\newblock In {\em 12th USENIX Symposium on Operating Systems Design and
  Implementation}, pages 485--500. USENIX Association, 2016.

\bibitem{lynch1996distributed}
Nancy~A Lynch.
\newblock {\em Distributed algorithms}.
\newblock Elsevier, 1996.

\bibitem{lysyanskaya2006rationality}
Anna Lysyanskaya and Nikos Triandopoulos.
\newblock Rationality and adversarial behavior in multi-party computation.
\newblock In {\em Annual International Cryptology Conference}, pages 180--197.
  Springer, 2006.

\bibitem{Malkhi:1997:BQS:258533.258650}
Dahlia Malkhi and Michael Reiter.
\newblock Byzantine quorum systems.
\newblock In {\em Proceedings of the Twenty-ninth Annual ACM Symposium on
  Theory of Computing}, STOC '97, pages 569--578, New York, NY, USA, 1997. ACM.

\bibitem{Martin06}
J-P Martin and Lorenzo Alvisi.
\newblock Fast byzantine consensus.
\newblock {\em IEEE Transactions on Dependable and Secure Computing},
  3(3):202--215, 2006.

\bibitem{mazieres2015stellar}
David Mazieres.
\newblock The stellar consensus protocol: A federated model for internet-level
  consensus, 2015.

\bibitem{Algorand}
Silvio Micali.
\newblock Algorand: The efficient and democratic ledger.
\newblock arXiv:1607.01341, 2016.

\bibitem{micali2017optimal}
Silvio Micali and Vinod Vaikuntanathan.
\newblock Optimal and player-replaceable consensus with an honest majority.
\newblock 2017.

\bibitem{pass2018thunderella}
Rafael Pass and Elaine Shi.
\newblock Thunderella: Blockchains with optimistic instant confirmation.
\newblock In {\em Annual International Conference on the Theory and
  Applications of Cryptographic Techniques}, pages 3--33. Springer, 2018.

\bibitem{PSL80}
M.~Pease, R.~Shostak, and L.~Lamport.
\newblock Reaching agreement in the presence of faults.
\newblock {\em J. ACM}, 27(2):228--234, April 1980.

\bibitem{schneider1990implementing}
Fred~B Schneider.
\newblock Implementing fault-tolerant services using the state machine
  approach: A tutorial.
\newblock {\em ACM Computing Surveys (CSUR)}, 22(4):299--319, 1990.

\bibitem{Yin03}
Jian Yin, Jean-Philippe Martin, Arun Venkataramani, Lorenzo Alvisi, and Mike
  Dahlin.
\newblock Separating agreement from execution for byzantine fault tolerant
  services.
\newblock {\em ACM SIGOPS Operating Systems Review}, 37(5):253--267, 2003.

\bibitem{yin2018hotstuff}
Maofan Yin, Dahlia Malkhi, Michael~K Reiter, Guy Golan~Gueta, and Ittai
  Abraham.
\newblock {HotStuff: BFT Consensus in the Lens of Blockchain}.
\newblock {\em arXiv preprint arXiv:1803.05069}, 2018.

\end{thebibliography}
\end{document}